\newtheorem{theorem}{Theorem}[section]
\newtheorem{corollary}[theorem]{Corollary}
\newtheorem{lemma}[theorem]{Lemma}
\newtheorem{proposition}[theorem]{Proposition}
\newtheorem*{11ORT}{1-1 Operator Recursion Theorem}
\theoremstyle{definition}
\newtheorem{definition}[theorem]{Definition}
\theoremstyle{remark}
\theoremstyle{theorem}
\theoremstyle{remark}
\newtheorem*{examplenonum}{Example}
\newcounter{caseCounter}[theorem]
\newcommand*{\ORT}{\mathrm{ORT}\xspace}
\newcommand*{\Txt}{\mathbf{Txt}}
\newcommand*{\Inf}{\mathbf{Inf}}
\newcommand*{\Fn}{\mathbf{Fn}}
\newcommand*{\Sd}{\mathbf{Sd}}
\newcommand*{\Meth}{\Sigma}
\newcommand*{\Ex}{\mathbf{Ex}}
\newcommand*{\Lim}{\mathbf{Ex}}
\newcommand*{\Bc}{\mathbf{Bc}}
\newcommand*{\Comp}{\mathrm{Cons}}
\newcommand*{\Cons}{\mathbf{Cons}}
\newcommand*{\Conv}{\mathbf{Conv}}
\newcommand*{\Caut}{\mathbf{Caut}}
\newcommand*{\NU}{\mathbf{NU}}
\newcommand*{\SNU}{\mathbf{SNU}}
\newcommand*{\Dec}{\mathbf{Dec}}
\newcommand*{\SDec}{\mathbf{SDec}}
\newcommand*{\SynDec}{\mathbf{SynDec}}
\newcommand*{\WMon}{\mathbf{WMon}}
\newcommand*{\Mon}{\mathbf{Mon}}
\newcommand*{\SMon}{\mathbf{SMon}}
\newcommand*{\N}{\mathbb{N}}
\newcommand*{\natnum}{\mathbb{N}}
\newcommand*{\Nttwo}{\N\!\times\!\{0,1\}}
\newcommand*{\Pow}{\mathrm{Pow}}
\newcommand*{\CalL}{\mathcal{L}} % backward compatibility
\newcommand*{\CalI}{\mathbf{Inf}}
\newcommand*{\CalF}{\mathcal{F}}
\newcommand*{\Simu}{\mathfrak{S}}
\newcommand*{\simu}{\mathfrak{s}}
\renewcommand*{\o}{\omega}
\newcommand*{\Seq}[1]{{\text{$#1^{< \o}$}}}
\newcommand{\ISeq}[1]{\text{$#1^{\o}$}}
\newcommand{\IfiSeq}[1]{\text{$#1^{\leq \o}$}}
\newcommand*{\totalFn}{\mathfrak{R}}
\newcommand*{\partialFn}{\mathfrak{P}}
\newcommand*{\partFn}{\mathfrak{P}}
\newcommand*{\totalCp}{\mathcal{R}}
\newcommand*{\partialCp}{\mathcal{P}}
\newcommand*{\CInd}{C\mathbf{Ind}}
\newcommand{\dom}{\mathrm{dom}}
\newcommand{\ran}{\mathrm{ran}}
\newcommand{\ps}{\mathrm{pos}}
\renewcommand{\ng}{\mathrm{neg}}
\newcommand{\concat}{^\smallfrown}
\newcommand{\pr}{\mathrm{pr}}
\newcommand{\ind}{\mathrm{ind}}
\newcommand{\pad}{\mathrm{pad}}
\newcommand*{\inseg}{\sqsubseteq}
\newcommand*{\prinseg}{\sqsubsetneq}
\newcommand*{\can}{\mathrm{can}}
\newcommand*{\decode}{\mathrm{decode}}
\newcommand*{\classifier}{\chi}
\newcommand*{\Diff}{\mathrm{Diff}}
\newcommand*{\set}[2]{\{ #1 \mid #2 \}}
\providecommand{\ignore}[1]{}
\title[Learning from Informant]{Learning from Informant: \\
Relations between Learning Success Criteria}
\author{Martin Aschenbach \and Timo Kötzing \and Karen Seidel}
\address{Hasso Plattner Institute \\
        Prof.-Dr.-Helmert-Str. 2-3 \\
        D-14482 Potsdam}
\email{karen.seidel@hpi.de}
\begin{document}

\maketitle

\begin{abstract}
Learning from positive and negative information, so-called \emph{informant}, being one of the models for human and machine learning introduced by E.~M.~Gold is investigated.
Particularly, naturally arising questions about this learning setting, originating in results on learning from solely positive information, are answered.

\smallskip
By a carefully arranged argument learners can be assumed to only change their hypothesis in case it is inconsistent with the data (such a learning behavior is called \emph{conservative}). The deduced main theorem states the relations between the most important delayable learning success criteria, being the ones not ruined by a delayed in time hypothesis output.

\smallskip
Additionally, our investigations concerning the non-delayable requirement of consistent learning underpin the claim for \emph{delayability} being the right structural property to gain a deeper understanding concerning the nature of learning success criteria.

\smallskip
Moreover, we obtain an anomalous \emph{hierarchy} when allowing for an increasing finite number of \emph{anomalies} of the hypothesized language by the learner compared with the language to be learned.

In contrast to the vacillatory hierarchy for learning from solely positive information, we observe a \emph{duality} depending on whether infinitely many \emph{vacillations} between different (almost) correct hypotheses are still considered a successful learning behavior.

\end{abstract}

%\keywords{language identification and approximation in the limit, informant learning, positive and negative data, consistent learning, delayable learning restrictions}

\section{Introduction}
\label{sec:intro}
Research in the area of \emph{inductive inference} aims at investigating the learning of formal languages and has connections to computability theory, complexity theory, cognitive science, machine learning, and more generally artificial intelligence.
Setting up a classification program for deciding whether a given word belongs to a certain language can be seen as a problem in supervised machine learning, where the machine experiences labeled data about the target language. The label is $1$ if the datum is contained in the language and $0$ otherwise. The machine's task is to infer some rule in order to generate words in the language of interest and thereby generalize from the training samples.
This so-called \emph{learning from informant} was introduced in \cite{Gol:j:67} and further investigated in several publications, including \cite{Blu-Blu:j:75}, \cite{Bar:c:77:aut-func-prog}, \cite{STL1}, \cite{lange1996monotonic} and \cite{pmlr-v76-holzl17a}.

\smallskip
According to \cite{Gol:j:67} the learner is modelled by a computable function, successively receiving sequences incorporating more and more data. The source of labeled data is called an \emph{informant}, which is supposed to be \emph{complete in the limit}, i.e., every word in the language must occur at least once.
Thereby, the learner possibly updates the current description of the target language (its hypothesis).
Learning is considered successful, if after some finite time the learners' hypotheses yield good enough approximations to the target language. The original and most common learning success criterion is called \emph{$\Lim$-learning} %\footnote{$\Lim$-learning is also referred to as \emph{explanatory learning} or \emph{$\Ex$-learning.}}
 and additionally requires that the learner eventually settles on exactly one correct hypothesis, which precisely captures the words in the language to be learned.
As a single language can easily be learned, the interesting question is whether there is a learner successful on all languages in a fixed collection of languages.

\begin{examplenonum}
Consider $\CalL=\{\,\N\setminus X \mid X\subseteq\N \text{ finite}\,\}$, the collection of all co-finite sets of natural numbers.
Clearly, there is a computable function $p$ mapping finite subsets $X\subseteq\N$ to $p(X)$, such that $p(X)$ encodes a program which stops if and only if the input is not in $X$.
We call $p(X)$ an \emph{index} for $\N\setminus X$.
The learner is successful if for every finite $X\subseteq\N$ it infers $p(X)$ from a possibly very large but finite number of samples labeled according to $\N\setminus X$.

Regarding this example, let us assume the first two samples are $(60,1)$ and $(2,0)$. The first datum still leaves all options with $60\notin X$. As the second datum tells us that $2 \in X$, we may make the learner guess $p(\{2\})$ until possibly more negative data is available. Thus, the collection of all co-finite sets of natural numbers is $\Lim$-learnable from informant, simply by making the learner guess the complement of all negative information obtained so far. Since after finitely many steps all elements of the finite complement of the target language have been observed, the learner will be correct from that point onward.

It is well-known that this collection of languages cannot be learned from purely positive information. Intuitively, at any time the learner cannot distinguish the whole set of natural numbers from all other co-finite sets which contain all natural numbers presented to the learner until this point.
\end{examplenonum}

Learning from solely positive information, so-called \emph{text}, has been studied extensively, including many learning success criteria and other variations. Some results are summed up in \cite{Jai-Osh-Roy-Sha:b:99:stl2} and \cite{Case2016gold}. We address the naturally arising question what difference it makes to learn from positive and negative information.

\subsection{Our Contributions}

%Analyzing different measures of learning success answers natural questions regarding the learning process. %We provide relevant fundamentals from the setting of learning from purely positive data, so-called text, for the setting of learning from informant. Thereafter Lemma \ref{FnResultsToInf} makes it possible to 
For learning from text there are entire maps displaying the pairwise relations of different well-known learning success criteria, see \cite{kotzing2014map}, \cite{kotzing2016towards} and \cite{jain2016role}. We give an equally informative map for $\Lim$-learning from informant.

The most important requirements on the learning process when learning from informant are \emph{conservativeness} ($\Conv$), where only inconsistent hypotheses are allowed to be changed; \emph{strong decisiveness} ($\SDec$), forbidding to ever return semantically to a withdrawn hypothesis; \emph{strong monotonicity} ($\SMon$), requiring that in every step the hypothesis incorporates the former one; \emph{monotonicity} ($\Mon$), fulfilled if in every step the set of correctly inferred words incorporates the formerly correctly guessed; \emph{cautiousness} ($\Caut$), for which never a strict subset of earlier conjectures is guessed.
%\begin{enumerate}
%\item[1.] \emph{Conservativeness} ($\Conv$) -- only inconsistent hypotheses are allowed to be changed
%\item[2.] \emph{Strong Monotonicity} ($\SMon$) -- in every step the hypothesis incorporates the former one
%\item[3.] \emph{Monotonicity} ($\Mon$) -- in every step the set of correctly inferred words incorporates the formerly correctly guessed
%\item[4.] \emph{Cautiousness} ($\Caut$) -- never a strict subsets of earlier conjectures is guessed
%\end{enumerate}
In \cite{lange1996monotonic} it was %\cite{lange1991monotonic} 
observed that requiring monotonicity is restrictive and that under the assumption of strong monotonicity even fewer collections of languages can be learned from informant.
We complete the picture by answering the following questions regarding $\Lim$-learning from informant positively:

\begin{enumerate}
\item[1.] Is every learnable collection of languages also learnable in a conservatively and strongly decisively way?
\item[2.] Are monotonic and cautious learning incomparable?
\end{enumerate}

The above mentioned observations in \cite{lange1996monotonic} follow from positively answering the second question.

%Our main theorem combines several Lemmata and Propositions and reads as follows.
%\begin{theoremintro}
%For every learning success criterion $\delta$ in $\Lim$-learning from informant holds
%\begin{enumerate}
%\item $\delta \neq \Cons$ does not restrict learning if and only if $\delta \notin \{\Caut, \Mon, \SMon\},$
%\item $\Caut$ and $\Mon$ are incomparable and restricted by $\SMon$.
%\end{enumerate}
%\end{theoremintro}

\medskip
A diagram incorporating the resulting map is depicted in Figure~\ref{DiagramInfDelayableEx}.
%, where $\Conv$ refers to the prominent concept of \emph{conservativeness}, addressed with the first question. Further, $\Mon$ and $\Caut$ stand for monotonic and cautious $\Lim$-learning, respectively.
%All learning success criteria are defined in Subsection~\ref{subsec:learningsuccess}. 
The complete map can be found in Figure~\ref{MapInfDelayableEx}.

\medskip
Answering the first question builds on providing the two \emph{normal forms} of (1) requiring learning success only on the information presented in the canonical order and (2) assuming the learner to be defined on all input sequences. Further, a regularity property borrowed from text learning plays a crucial role in the proof.

\medskip
Requiring all of the learners guesses to be \emph{consistent} with the positive and the negative information being presented to it so far makes learning harder. Next to this we also observe that the above normal forms cannot be assumed when the learner is required to act consistently.
On the one hand, it is easier to find a learner for a collection of languages that consistently learns each of them only from the canonical presentation than finding one consistently learning them from arbitrary informant.
On the other hand finding a total learner consistently $\Lim$-learning a collection of languages is harder than finding a partial one.

We further transfer the concept of a learning success criterion to be invariant under time-delayed outputs of the hypotheses, introduced for learning from text in \cite{kotzing2016map} and generalized in \cite{KSS17}, to the setting of learning from informant.
Consistency is not \emph{delayable} since a hypothesis which is consistent now might be inconsistent later due to new data.
As this is the only requirement not being delayable, the results mentioned in the last paragraph justify the conjecture of delayability being the right property to proof more results that at once apply to all learning success criteria but consistency.

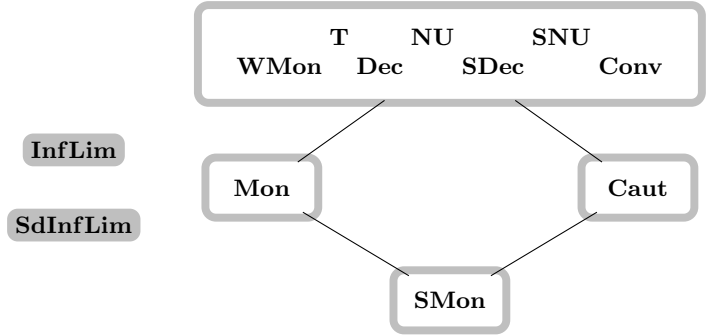
\begin{figure}[h]
\begin{center}
\begin{minipage}{12cm}
\begin{tikzpicture}[above,sloped,shorten <=2mm,, shorten >=2mm]

\begin{scope}[every node/.style={minimum size=4mm}]

\node[shape=rectangle,rounded corners,fill=lightgray] at (-5,-1) {$\Inf\Lim$};
\node[shape=rectangle,rounded corners,fill=lightgray] at (-5,-2) {$\Sd\Inf\Lim$};

\node (nothing) at (0,0)  {
$\begin{array}{ccccccc}
& \mathbf{T}  & & \NU && \SNU \\
\WMon && \Dec && \SDec && \Conv
\end{array}$
};

\node (caut) at (2.5,-1.5)		{$\Caut$};
\node (mon) at (-2.5,-1.5)		  {$\Mon$}; 
\node (smon) at (0,-3)	  {$\SMon$};

\node[draw=lightgray,line width=3pt, rounded corners, minimum width= 87mm,minimum height = 13mm, anchor=center] at (nothing) {};
\node[draw=lightgray,line width=3pt, rounded corners, minimum width= 15mm,minimum height = 8mm, anchor=center] at (caut) {};
\node[draw=lightgray,line width=3pt, rounded corners, minimum width= 15mm,minimum height = 8mm, anchor=center] at (mon) {};
\node[draw=lightgray,line width=3pt, rounded corners, minimum width= 15mm,minimum height = 8mm, anchor=center] at (smon) {};

\draw (nothing) -- (caut);
\draw (nothing) -- (mon);
\draw (caut) -- (smon);
\draw (mon) -- (smon);

\end{scope}

\end{tikzpicture}
\end{minipage}
\end{center}
\caption{Relations between delayable learning restrictions in $\Lim$-learning from informant.
Implications are represented as black lines from bottom to top.
Two learning settings are equivalent if and only if they lie in the same grey outlined zone.}
\label{DiagramInfDelayableEx}
\end{figure}

\medskip
While in \cite{lange1994characterization} variously restricted learning of collections of recursive languages with a uniform decision procedure are considered, the above mentioned results also apply to arbitrary collections of recursively enumerable sets. Further, our results are as strong as possible, meaning that negative results are stated for indexable families, if possible, and positive results for all collections of languages.

\medskip
In this spirit we add to a careful investigation on how informant and text learning relate to each other in \cite{Lan-Zeu:c:93}. We show that even for the most restrictive delayable learning success criterion when $\Lim$-learning from informant there is a collection of recursive languages learnable in this setting that is not $\Lim$-learnable from text.

\medskip
Admitting for finitely many anomalies $a$, i.e. elements of the symmetric difference of the hypothesized and the target concept, yields the anomalous hierarchy for learning from text in \cite{Cas-Lyn:c:82}. We provide an equivalence between learning collections of functions from enumerations of their graphs to learning the languages encoding their graphs from informant. This allows us to transfer the anomalous hierarchy when learning functions in \cite{Bar:j:74:two-thrm} and \cite{case1983comparison} to the setting of learning from informant and therefore obtain a hierarchy
$$[\Inf\Lim] \subsetneq \ldots \subsetneq [\Inf\Lim^a] \subsetneq [\Inf\Lim^{a+1}] \subsetneq \ldots.$$

\medskip
\cite{case1999power} observed the vacillatory hierarchy for learning from text.
There\-by in the limit a vacillation between $b$ many (almost) correct descriptions is allowed, where $b \in \N_{>0} \cup \{\infty\}$.
In contrast we observe a duality by showing that, when learning from informant, requiring the learner to eventually output exactly one correct enumeration procedure is as powerful as allowing any finite number of correct descriptions in the limit. Furthermore, even facing all appropriate learning restrictions at hand gives us more learning power for $b=\infty$, known as behaviorally correct ($\Bc$) learning.
In particular, we obtain for all $b\in\N_{>0}$
$$[\Inf\Lim] = \ldots = [\Inf\Lim_b] = [\Inf\Lim_{b+1}] = \ldots \subsetneq [\Inf\Bc].$$
We also compare learning settings in which both $a$ and $b$ do not take their standard values $0$ and $1$, respectively.

\subsection{More Connections to Prior Research}

In contrast to our observations, it has been shown in \cite{angluin1980inductive} that requiring a conservative learning process is a restriction when learning from text. Further, this is equivalent to cautious learning as shown in \cite{kotzing2016map}.
That monotonic learning is restrictive and incomparable to both of them in the text learning setting follows from \cite{lange1996monotonic}, \cite{kinber1995language}, \cite{Jai-Sha:j:98} and \cite{kotzing2016map}.
Further, when learning from text, strong monotonicity is again the most restrictive assumption by \cite{lange1996monotonic}.
Strong decisiveness is restrictive, see \cite{Bal-Cas-Mer-Ste-Wie:j:08}, and further is restricted by cautiousness/conservativeness on the one hand and monotonicity on the other hand by \cite{kotzing2016map}.
In the latter visualizations and a detailed discussion are provided.

When the learner does not have access to the order of presentation but knows the number of samples, the map remains the same as observed in \cite{kotzing2016towards}.

In case the learner makes its decisions only based on the set of presented samples and ignores any information about the way it is presented, it is called \emph{set-driven} ($\Sd$). For such set-driven learners, when learning from text, conservative, strongly decisive and cautious learning are no longer restrictive and the situation with monotonic and strong monotonic learning remains unchanged by \cite{kinber1995language} and \cite{kotzing2016map}.

We observe that for delayable informant learning all three kinds of learners yield the same map. Thus, our results imply that negative information compensates for the lack of information set-driven learners have to deal with.

\medskip
\cite{Gol:j:67} was already interested in the above mentioned normal forms and proved that they can be assumed without loss of generality in the basic setting of pure $\Lim$-learning, whereas our results apply to all delayable learning success criteria. %, which share the common feature that hypotheses can be delayed without violating the learning restriction.

\medskip
The name ``delayability'' refers to tricks in order to delay mind changes of the learner which were used to obtain polynomial computation times for the learners hypothesis updates as discussed in \cite{Pit:c:89} and \cite{Cas-Koe:c:09}.
Moreover, it should not be confused with the notion of $\delta$-delay, \cite{akama2008consistent}, which allows satisfaction of the considered learning restriction $\delta$ steps later than in the un-$\delta$-delayed version.

\medskip
In \cite{STL1} several restrictions for learning from informant are analyzed and mentioned that cautious learning is a restriction to learning power; we extend this statement with our Proposition~\ref{InfMonNotCautEx} in which we give one half of the answer to the second question above by providing a family of languages not cautiously but monotonically $\Lim$-learnable from informant.

\medskip
Furthermore, \cite{STL1} consider a version of \emph{conservativeness} where mind changes are only allowed if there is \emph{positive} data contradicting the current hypothesis, which they claim to restrict learning power.
In this paper, we stick to the more common definition in \cite{Blu-Blu:j:75} and \cite{Bar:c:77:aut-func-prog}, according to which mind changes are allowed also when there is negative data contradicting the current hypothesis.

\subsection{Outline}

In Section~\ref{sec:setting} the setting of learning from informant is formally introduced by transferring fundamental definitions and ---as far as possible--- observations from the setting of learning from text.
In Section~\ref{sec:WlogTotalMethodical} in order to derive the entire map of pairwise relations between delayable $\Lim$-learning success criteria, normal forms and a regularity property for such learning from informant are provided. Further, consistent learning is being investigated.
In Section~\ref{sec:map_delayable} we answer the questions above and present all pairwise relations of learning criteria in Theorem~\ref{InfExMap}.
In Section \ref{sec:infvstxt} we generalize the result in \cite{Gol:j:67}, we already gave a proof-sketch for, namely $\Lim$-learning from text to be harder than $\Lim$-learning from informant.
In Section~\ref{sec:hierarchies} we provide the aforementioned anomalous hierarchy and vacillatory duality.

\medskip
We kept every section as self-contained as possible. Unavoidably, all sections build on Section~\ref{sec:setting}.
Additionally, Section~\ref{sec:map_delayable} builds on Section~\ref{sec:WlogTotalMethodical}.

\section{Informant Learning}
\label{sec:setting}
We formally introduce the notion of an informant and transfer concepts and fundamental results from the setting of learning from text to learning from informant.
This includes the learner itself, convergence criteria, locking sequences, learning restrictions and success criteria as well as a compact notation for comparing different learning settings.
In the last subsection delayability as the central property of learning restrictions and learning success criteria is formally introduced.

\medskip
As far as possible, notation and terminology on the learning theoretic side follow \cite{Jai-Osh-Roy-Sha:b:99:stl2}, whereas on the computability theoretic side we refer to \cite{Odi:b:99}.

\medskip
We let $\N$ denote the \emph{natural numbers} including $0$ and write $\infty$ for an \emph{infinite cardinality}.
Moreover, for a function $f$ we write $\dom(f)$ for its \emph{domain} and $\ran(f)$ for its \emph{range}.
If we deal with (a subset of) a cartesian product, we are going to refer to the \emph{projection functions} to the first or second coordinate by $\pr_1$ and $\pr_2$, respectively.
For sets $X, Y$ and $a \in \N$ we write $X =^a Y$, if \emph{$X$ equals $Y$ with $a$ anomalies}, i.e., $| (X\setminus Y) \cup (Y\setminus X) | \leq a$, where $|.|$ denotes the \emph{cardinality function}.
In this spirit we write $X =^\ast Y$, if there exists some $a \in \N$ such that $X = ^a Y$.
Further, $\Seq{X}$ denotes the \emph{finite sequence}s over $X$ and $\ISeq{X}$ stands for the \emph{countably infinite sequence}s over $X$.
Additionally, $\IfiSeq{X} := \Seq{X} \cup \ISeq{X}$ denotes the set of all \emph{countably finite or infinite sequence}s over $X$. 
For every $f \in \IfiSeq{X}$ and $t \in \N$, we let
$f[t] := \{ (s,f(s)) \mid s < t \}$ denote the \emph{restriction of $f$ to $t$}.
Finally, for sequences $\sigma, \tau \in \Seq{X}$ their concatenation is denoted by $\sigma\concat\tau$ and we write $\sigma \inseg \tau$, if $\sigma$ is an initial segment of $\tau$, i.e., there is some $t \in \N$ such that $\sigma = \tau[t]$.
In our setting, we typically have %$X = \N$ or
$X=\N \times \{0,1\}$.
We denote by $\partialFn$ and $\totalFn$ the set of all partial functions $f: \dom(f) \subseteq \Seq{\Nttwo} \to \N$ and total functions $f: \Seq{\Nttwo} \to \N$, respectively.

\medskip
Let $L \subseteq \N$.
If $L$ is recursively enumerable, we call $L$ a \emph{language}.
In case its characteristic function is computable, we say it is a \emph{recursive language}.
Moreover, we call $\CalL \subseteq \Pow(\N)$ a \emph{collection of (recursive) languages}, if every $L \in \CalL$ is a (recursive) language.
In case there exists an enumeration $\{ L_\xi \mid \xi \in \Xi \}$ of $\CalL$, where $\Xi \subseteq \N$ is recursive
and a computable function $f$ with $\ran(f) \subseteq \{0,1\}$ such that 
$x \in L_\xi \Leftrightarrow f(x,\xi)=1$ for all $\xi \in \Xi$ and $x \in \N$,
we say $\CalL$ is an \emph{indexable family of recursive languages}. By definition indexable families are collections of recursive languages with a uniform decision procedure.

Further, we fix a programming system $\varphi$ as introduced in \cite{Roy-Cas:b:94}. Briefly, in the $\varphi$-system, for a natural number $p$, we denote by $\varphi_p$ the partial computable function with program code $p$.
We call $p$ an \emph{index} for $W_p := \dom(\varphi_p)$.
For a finite set $X\subseteq\N$ we denote by $\ind(X)$ a canonical index for $X$.
In reference to a Blum complexity measure, for all $p, t \in \N$, we denote by $W^t_p \subseteq W_p$ the recursive set of all natural numbers less or equal to $t$, on which the machine executing $p$ halts in at most $t$ steps.
%We are going to make use of these recursive sets in the proof of our essential Proposition~\ref{GWlogConvSdec}, showing that conservativeness does not restrict (explanatory) $\Lim$-learning from informant.
Moreover, by s-m-n we refer to a well-known recursion theoretic observation, which gives finite and infinite recursion theorems, like Case's Operator Recursion Theorem $\ORT$, \cite{case1974periodicity}.
Inuitively, it states that for every recursive operator there is a computable function that is a fixed point of the action of the operator on the $\varphi$-system.
Formally, a 1-1 version of this result reads as follows.

\begin{11ORT}[\cite{Koe:th:09}]
\label{11ORT}
Let $\Theta: \partialCp \to \partialCp$ be a computable operator, namely a function mapping partial computable functions to partial computable functions.
Then there is a 1-1 computable function $h \in \partialCp$ such that
$\forall n, x \left( \varphi_{h(n)}(x) = \Theta(h)(n, x) \right).$
\end{11ORT}

\medskip
For our purposes the operator $\Theta$ will always be implicit.
The first application of ORT is in Proposition~\ref{prop:TotalInfConsEx} and it occurs in many different variants in other proofs.
For further intuitions see for example \cite{Cas:j:94:self}.

Finally, we let $H = \{\,p \in \N\mid \varphi_p(p)\!\!\downarrow\,\}$ denote the halting problem.

\subsection{informant and Learners}

%We introduce the two most fundamental notions for all results presented.

%\medskip
Intuitively, for any natural number $x$ an \emph{informant for a language $L$} answers the question whether $x \in L$ in finite time.
More precisely, for every natural number $x$ the informant $I$ has either $(x,1)$ or $(x,0)$ in its range, where the first is interpreted as $x\in L$ and the second as $x \notin L$, respectively.

\begin{definition}
\label{def:informant}
\begin{enumerate}
\item Let $f \in \IfiSeq{(\Nttwo)}$. We denote by
\begin{align*}
\ps(f) &:= \{ y \in \N \mid \exists x \in \N \colon \pr_1(f(x))=y \wedge \pr_2(f(x))=1 \}, \\
\ng(f) &:= \{ y \in \N \mid \exists x \in \N \colon \pr_1(f(x))=y \wedge \pr_2(f(x))=0 \}
\end{align*}
the sets of all natural numbers, about which $f$ gives some positive or negative information, respectively.
\item Let $L$ be a language. We call every function $I: \N \to \Nttwo$ such that $\ps(I) \cup \ng(I) = \N$ and $\ps(I) \cap \ng(I) = \varnothing$ an \emph{informant}. Further, we denote by $\CalI$ the set of all informant and
the \emph{set of all informant for the language $L$} is defined as
$$\CalI(L) := \{ I \in \CalI \mid \ps(I) = L \}.$$
\item Let $I$ be an informant. If for every time $t \in \N$ reveals information about $t$ itself, for short $\pr_1(I(t))=t$, we call $I$ a \emph{canonical informant}.
\end{enumerate}
\end{definition}

It is immediate, that $\ng(I)=\N\setminus L$ for every $I \in \CalI(L)$.
In \cite{Gol:j:67} a canonical informant is referred to as \emph{methodical informant}.

\bigskip
We employ Turings model for human computers which is the foundation of all modern computers to model the processes in human and machine learning.

\begin{definition}
\label{def:learner}
A \emph{learner} is a (partial) computable function
$$M: dom(M) \subseteq \Seq{(\Nttwo)} \to \N.$$
\end{definition}

The set of all partial computable functions $M: \dom(M) \subseteq \Seq{\Nttwo} \to \N$ and total computable functions $M: \Seq{\Nttwo} \to \N$ are denoted by $\partialCp$ and $\totalCp$, respectively.

\subsection{Convergence Criteria and Locking Sequences}

Convergence criteria tell us what quality of the approximation and syntactic accuracy of the learners' eventual hypotheses are necessary to call learning successful.
Further, we proof that learning success implies the existence of sequences on which the learner is locked in a way corresponding to the convergence criterion. We will use locking sequences to show that a collection of languages cannot be learned in a certain way.

\begin{definition}
\label{def:InfConvergence}
Let $M$ be a learner and $\CalL$ a collection of languages. Further, let $a \in \N\cup\{\ast\}$ and $b \in \N_{>0}\cup\{\ast,\infty\}$.
\begin{enumerate}
\item Let $L \in \CalL$ be a language and $I \in \Inf(L)$ an informant for $L$ presented to $M$.
\begin{enumerate}
\item We call $h=(h_t)_{t\in\N} \in \ISeq{\N}$, where $h_t := M(I[t])$ for all $t \in \N$, the \emph{learning sequence of $M$ on $I$}.
\item $M$ \emph{learns $L$ from $I$ with $a$ anomalies and vacillation number $b$ in the limit}, for short $M$ $\Lim^a_b$-learns $L$ from $I$ or $\Lim_b^a(M,I)$, if there is a time $t_0 \in \N$ such that $|\,\{\, h_t \mid t \geq t_0 \,\}\,| \leq b$ and for all $t \geq t_0$ we have $W_{h_t}=^a L$.
\end{enumerate}
\item $M$ \emph{learns $\CalL$ with $a$ anomalies and vacillation number $b$ in the limit}, for short $M$ $\Lim^a_b$-learns $\CalL$, if $\Lim^a_b(M,I)$ for every $L\in\CalL$ and every $I \in \Inf(L)$.
\end{enumerate}
\end{definition}

The intuition behind \textit{(i)(b)} is that, sensing $I$, $M$ eventually only vacillates between at most $b$-many hypotheses, where the case $b = \ast$ stands for eventually finitely many different hypotheses.
In convenience with the literature, we omit the superscript $0$ and the subscript $1$.

%$h$ is the sequence of $M$'s hypotheses when observing the informant $I$.

\medskip
$\Lim$-learning, also known as \emph{explanatory learning}, is the most common definition for successful learning and corresponds to the notion of identifiability in the limit by \cite{Gol:j:67}, where the learner eventually decides on one correct hypothesis.
On the other end of the hierarchy of convergence criteria is \emph{behaviorally correct learning}, for short \emph{$\Bc$-} or $\Lim_\infty$-learning, which only requires the learner to be eventually correct, but allows infinitely many syntactically different hypotheses in the limit.
Behaviorally correct learning was introduced in  \cite{Osh-Wei:j:82:criteria}.
The general definition of $\Lim^a_b$-learning for $a \in \N\cup\{\ast\}$ and $b \in \N_{>0}\cup\{\ast\}$ was first mentioned in \cite{case1999power}.

In our setting, we also allow $b=\infty$ and subsume all $\Lim^a_b$ under the notion of a \emph{convergence criterion},
since they determine in which semi-topological sense the learning sequence needs to have $L$ as its limit, in order to succeed in learning $L$.

\bigskip
In the following we transfer an often employed observation in \cite{Blu-Blu:j:75} to the setting of learning from informant and generalize it to all convergence criteria introduced in Definition~\ref{def:InfConvergence}.

\begin{definition}
\label{def:lockingsequence}
Let $M$ be a learner, $L$ a language and $a\in\N\cup\{\ast\}$ as well as $b\in\N_{>0}\cup\{\ast,\infty\}$. We call $\sigma \in \Seq{(\Nttwo)}$ a \emph{$\Lim^a_b$-locking sequence for $M$ on $L$}, if $\Comp(\sigma,L)$ and
\begin{align*}
\exists D \subseteq \N \:\left(\,\right. &|D|\leq b \:\wedge\: \forall \tau \in \Seq{(\Nttwo)} \\
&\left. \left(\, \Comp(\tau,L) \Rightarrow \left( M(\sigma\concat\tau)\!\!\downarrow \wedge \,W_{M(\sigma\concat\tau)}=^aL \wedge M(\sigma\concat\tau)\in D \,\right)\right) \,\right)
\end{align*}
%\begin{align*}
%\exists D \subseteq \N \:\left(\,\right. &|D|\leq b \:\wedge\: \forall \tau \in \Seq{(\Nttwo)} \\
%&\left. \left( \left( \Comp(\tau,L) \wedge M(\sigma\concat\tau)\!\!\downarrow \right) \Rightarrow \left( W_{M(\sigma\concat\tau)}=^aL \wedge M(\sigma\concat\tau)\in D\right)\right) \,\right)
%\end{align*}
Further, a \emph{locking sequence for $M$ on $L$} is a $\Lim$-locking sequence for $M$ on $L$.
\end{definition}

Intuitively, the learner $M$ is locked by the sequence $\sigma$ onto the language $L$ in the sense that no presentation consistent with $L$ can circumvent $M$ guessing admissible approximations to $L$ and additionally all guesses based on an extension of $\sigma$ are captured by a finite set of size at most $b$.

\smallskip
Note that the definition implies $M(\sigma)\!\!\downarrow$, $W_{M(\sigma)}=^aL$ and $M(\sigma)\in D$.

\begin{lemma}
\label{lem:lockingsequence}
Let $M$ be a learner, $a\in\N\cup\{\ast\}$, $b\in\N_{>0}\cup\{\ast,\infty\}$ and $L$ a language $\Lim^a_b$-identified by $M$. Then there is a $\Lim^a_b$-locking sequence for $M$ on $L$.
\end{lemma}
\begin{proof}
This is a contradictory argument.
Without loss of generality $M$ is defined on $\varnothing$.
Assume towards a contradiction for every $\sigma$ with $\Comp(\sigma,L)$, $M(\sigma)\!\!\downarrow$ and $W_{M(\sigma)}=^aL$ and for every finite $D \subseteq \N$ with at most $b$ elements there exists a sequence $\tau^D_\sigma \in \Seq{(\Nttwo)}$ with
\begin{align*}
\Comp(\tau^D_\sigma,L) \wedge \left(\, M(\sigma\concat\tau^D_\sigma)\!\!\uparrow \vee \,\neg W_{M(\sigma\concat\tau^D_\sigma)}=^aL \vee M(\sigma\concat\tau^D_\sigma)\notin D \,\right).
\end{align*}
Let $I_L$ denote the canonical informant for $L$.
We obtain an informant for $L$ on which $M$ does not $\Lim^a_b$-converge by letting
\begin{align*}
I &:= \bigcup_{n\in\N}\sigma_n, \text{ with} \\
\sigma_0 &:= I_L[1],\\
\sigma_{n+1} &:= \sigma_n\concat\tau^{D_n}_{\sigma_n}{\concat}I_L(n+1)
\end{align*}
for all $n \in \N$, where in $D_n :=\{\,M(\sigma_i^-)\mid \max\{0,n-b+1\} \leq i \leq n\,\}$ we collect $M$'s at most $b$-many last relevant hypotheses.
Since $I$ is an informant for $L$ by having interlaced the canonical informant for $L$, the learner $M$  $\Lim^a_b$-converges on $I$.
Therefore, let $n_0$ be such that for all $t$ with $\sigma_{n_0}^- \inseg I[t]$ we have $h_t\!\!\downarrow$ and $W_{h_t}=^a L$. Then certainly $\{\,M(\sigma_i^-)\mid n_0\leq i\leq n_0+b\,\}$ has cardinality $b+1$, a contradiction.
\end{proof}

Obviously, an appropriate version also holds when learning from text is considered.

%\medskip
%With this we can now formally argue that the example $\CalL=\{\, \N\setminus X \mid X \text{ finite} \,\}$ from the introduction is not $\Lim$-learnable from text. Assume there is a learner $M$ witnessing the opposite. Then $M$ learns $\N \in \CalL$ and by Lemma~\ref{locksequ} there is a finite sequence $\sigma$ of natural numbers such that $W_{M(\sigma)}=\N$ and for every $\tau \niseg \sigma$ holds $M(\tau)=M(\sigma)$.
%Hence 

\subsection{Learning Success Criteria}
\label{subsec:learningsuccess}

We list the most common requirements that combined with a convergence criterion define when a learning process is considered successful. For this we first recall the notion of consistency of a sequence with a set according to \cite{Blu-Blu:j:75} and  \cite{Bar:c:77:aut-func-prog}.

\begin{definition}
\label{def:consistent}
Let $f \in \IfiSeq{(\Nttwo)}$ and $A \subseteq \N$. We define
\begin{align*}
\Comp(f,A) \quad &:\Leftrightarrow \quad \ps(f) \subseteq A \;\wedge\; \ng(f) \subseteq \N\setminus A
\end{align*}
and say \emph{$f$ is consistent with $A$}.
\end{definition}

%\smallskip
%In the following definition we list so-called learning restrictions, i.e., further potential properties of the learning sequence being investigated in this paper. Learning restrictions incorporate certain desired properties of the learners' behavior relative to the information being presented.

The choice of learning restrictions in the following definition is justified by prior investigations of the corresponding criteria, when learning from text, see \cite{kotzing2016map}, \cite{kotzing2016towards} and \cite{jain2016role}. %The original reference is given in parentheses, respectively.

\begin{definition}
\label{def:learningrestrictions}
Let $M$ be a learner, $I \in \CalI$ an informant and $h=(h_t)_{t\in\N} \in \ISeq{\N}$ the learning sequence of $M$ on $I$.
We write
\begin{enumerate}
        \item $\Cons(M,I)$ (\cite{angluin1980inductive}), if $M$ is \emph{consistent on $I$}, i.e., for all $t$
        $$\Comp(I[t],W_{h_t}).$$
        \item $\Conv(M,I)$ (\cite{angluin1980inductive}), if $M$ is \emph{conservative on $I$}, i.e., for all $s, t$ with $s \leq t$
        $$\Comp(I[t],W_{h_s}) \;\Rightarrow\; h_s = h_t.$$
        \item  $\Dec(M,I)$ (\cite{Osh-Sto-Wei:j:82:strategies}),
        if $M$ is \emph{decisive on $I$}, i.e.,
        for all $r, s, t$ with $r \leq s \leq t$
        $$W_{h_r} = W_{h_t} \;\Rightarrow\; W_{h_r} = W_{h_s}.$$
        \item $\Caut(M,I)$ (\cite{STL1}),
		if $M$ is \emph{cautious on $I$}, i.e.,
		for all $s, t$ with $s \leq t$
        $$\neg W_{h_t} \subsetneq W_{h_s}.$$
		\item $\WMon(M,I)$ (\cite{j-mniifp-91},\cite{Wie:c:91}), if $M$ is \emph{weakly monotonic on $I$}, i.e., for all $s, t$ with $s \leq t$\\
        $$\Comp(I[t], W_{h_s})
        \;\Rightarrow\; W_{h_s} \subseteq W_{h_t}.$$
        \item $\Mon(M,I)$ (\cite{j-mniifp-91},\cite{Wie:c:91}), if $M$ is \emph{monotonic on $I$}, i.e., for all $s, t$ with $s \leq t$
        $$W_{h_s} \cap \ps(I) \subseteq W_{h_t}\cap\ps(I).$$
        \item  $\SMon(M,I)$ (\cite{j-mniifp-91},\cite{Wie:c:91}), if $M$ is \emph{strongly monotonic on $I$}, i.e., for all $s, t$ with $s \leq t$
        $$W_{h_s} \subseteq W_{h_t}.$$
        \item  $\NU(M,I)$ (\cite{Bal-Cas-Mer-Ste-Wie:j:08}), if $M$ is \emph{non-U-shaped on $I$}, i.e., for all $r, s, t$ with $r \leq s \leq t$
        $$W_{h_r} = W_{h_t} = \ps(I) \;\Rightarrow\; W_{h_r} = W_{h_s}.$$
        \item  $\SNU(M,I)$ (\cite{Cas-Moe:j:11:optLan}), if $M$ is \emph{strongly non-U-shaped on $I$}, i.e., for all $r, s, t$ with $r \leq s \leq t$
        $$W_{h_r} = W_{h_t} = \ps(I) \;\Rightarrow\; h_r = h_s.$$
        \item  $\SDec(M,I)$ (\cite{kotzing2016map}), if $M$ is \emph{strongly decisive on $I$}, i.e., for all $r, s, t$ with $r \leq s \leq t$
        $$W_{h_r} = W_{h_t} \;\Rightarrow\; h_r = h_s.$$

\end{enumerate}%
\end{definition}

%Next to the convergence criterion $\Lim$, also learning restrictions like $\Cons$, $\Conv$ or $\NU$ were defined based on cognitive-psychological studies on human learning. \todo{check Angluin}
%On the other hand the learning restrictions $\Dec$, $\SDec$ and $\SNU$ are justified as they naturally arise in investigations of $\Lim$-learning from text.

%\medskip
The following lemma states the implications between almost all of the above defined learning restrictions, which form the foundation of our research. Figure~\ref{MapInfDelayableEx} %on page~\pageref{MapInfDelayableEx} 
includes the resulting backbone, which is slightly different from the one for learning from text, since $\WMon$ does not necessarily imply $\NU$ in the context of learning from informant. %The backbone of learning from positive information can be found in \cite{kotzing2014map}.

\begin{lemma}
\label{lem:delayablebackbone}
Let $M$ be a learner and $I \in \CalI$ an informant. Then
\begin{enumerate}
\item $\Conv(M,I)$ implies $\SNU(M,I)$ and $\WMon(M,I)$.
\item $\SDec(M,I)$ implies $\Dec(M,I)$ and $\SNU(M,I)$.
\item $\SMon(M,I)$ implies $\Caut(M,I), \Dec(M,I), \Mon(M,I)$ as well as $\WMon(M,I)$.
\item $\Dec(M,I)$ and $\SNU(M,I)$ imply $\NU(M,I)$.
\item $\WMon(M,I)$ does \emph{not} imply $\NU(M,I)$.
\end{enumerate}
\end{lemma}
\begin{proof}
Verifying the claimed implications is straightforward.
In order to verify \textit{(v)}, consider $L = 2\N$. 
Fix $p, q \in \N$ such that $W_p = 2\N \cup \{1\}$ and $W_q = 2\N$ and define the learner $M$ for all $\sigma \in \Seq{\Nttwo}$ by
\begin{align*}
M(\sigma) &= \begin{cases}
p, &\text{if } 1 \in \ng(\sigma) \wedge 2 \notin \ps(\sigma); \\
q, &\text{otherwise}.
\end{cases}
\intertext{In order to prove $\WMon(M,I)$ for every $I \in \Inf(L)$,
let $I$ be an informant for $L$ and
$\simu_I(x):=\min\{t \in \N \mid \pr_1(I(t))=x \}$,
i.e., $\simu_I(1)$ and $\simu_I(2)$ denote the first occurance of $(1,0)$ and $(2,1)$ in $\ran(I)$, respectively.
Then we have for all $t \in \N$}
W_{h_t}&=
\begin{cases}
2\N\cup\{1\}, &\text{if } \simu_I(1) < t \leq \simu_I(2); \\
2\N, &\text{otherwise}.
\end{cases}
\end{align*}
We have $W_{h_s}=W_{M(I[s])}=2\N\cup\{1\}$ as well as $1 \in \ng(I[t])$ for all $s,t\in \N$ with $\simu_I(1) < s \leq \simu_I(2)$ and $t > \simu_I(2)$.
Therefore, $\neg \Comp(I[t],W_{h_s})$ because of $\ng(I[t])\not \subseteq \N\setminus W_{h_s}$.
We obtain $\WMon(M,I)$ since whenever $s \leq t$ in $\N$ are such that $\Comp(I[t],W_{h_s})$, we know that $W_{h_s} = 2\N\cup\{1\}$ can only hold if likewise $\simu_I(1) < t \leq \simu_I(2)$ and hence $W_{h_t}=2\N\cup\{1\}$, which yields $W_{h_s}\subseteq W_{h_t}$. Furthermore, if $W_{h_s}=2\N$ all options for $W_{h_t}$ satisfy $W_{h_s}\subseteq W_{h_t}$.
Otherwise, in case $M$ observes the canonical informant $I$ for $L$, we have $W_{h_0}=W_{h_1}=2\N$, $W_{h_2}=2\N\cup\{1\}$ and $W_{h_t}=2\N$ for all $t > 2$, which shows $\neg \NU(M,I)$.
\end{proof}

By the next definition, in order to characterize what successful learning means, we choose a convergence criterion from Definition~\ref{def:InfConvergence} and may pose additional learning restrictions from Definition~\ref{def:learningrestrictions}.

\begin{definition}
\label{def:learningsuccesscriterion}
Let $\mathbf{T}:=\partialFn\times\Inf$ denote the whole set of pairs of possible learners and informant.
We denote by
\begin{align*}
    \Delta := \{ \,&\Caut, \Cons, \Conv, \Dec, \SDec, \\
            &\WMon, \Mon,\SMon,\NU,\SNU, \mathbf{T} \,\}
\end{align*}
the set of \emph{admissible learning restrictions} and by
$$\Gamma := \{\, \Lim^a_b \mid a \in \N\cup\{\ast\} \:\wedge\: b \in \N_{>0}\cup\{\ast,\infty\} \,\}$$ the set of \emph{convergence criteria}.
Further, if $$\beta \in \{ \: \bigcap_{i=0}^n \delta_i \cap \gamma \mid
n \in \N, \forall i \leq n \,(\delta_i \in \Delta) \text{ and } \gamma \in \Gamma \:\} \subseteq \partialFn \times \CalI,$$
we say that $\beta$ is a \emph{learning success criterion}.
\end{definition}

Note that every convergence criterion is indeed a learning success criterion by letting $n=0$ and $\delta_0=\mathbf{T}$, where the latter stands for no restriction. In the literature convergence criteria are also called identificaton criteria and then denoted by $I$ or $ID$.

\smallskip
We refer to all $\delta \in \{\Caut,\Cons,\Dec,\Mon,\SMon,\WMon,\NU,\mathbf{T}\}$ also as \emph{semantic} learning restrictions, as they allow for proper semantic convergence.
%can be combined with $\Lim_\infty$ without immediately yielding $\Lim$.

\subsection{Comparing the Learning Power of Learning Settings}

In order to state observations about how two ways of defining learning success relate to each other, the learning power of the different settings is encapsulated in notions $[\alpha\Inf\beta]$ defined as follows.

\begin{definition}
\label{def:learningcriterium}
Let $\alpha\subseteq\partialCp$ be a property of partial computable functions from the set $\Seq{(\Nttwo)}$ to $\N$ and $\beta$ a learning success criterion.
We denote by $[\alpha\Inf\beta]$ the set of all collections of languages that are $\beta$-learnable from informant by a learner $M$ with the property $\alpha$.

In case the learner only needs to succeed on canonical informant, we denote the corresponding set of collections of languages by $[\alpha\Inf_\can\beta]$. 
\end{definition}

In the learning success criterion at position $\beta$, the learning restrictions to meet are denoted in alphabetic order, followed by a convergence criterion.

At position $\alpha$, we restrict the set of admissible learners by requiring for example totality. The properties stated at position $\alpha$ are \emph{independent of learning success}.

\medskip
For example, a collection of languages $\CalL$ lies in $[\totalCp\Inf_\can\Conv\SDec\Lim]$ if and only if there is a total learner $M$ conservatively, strongly decisively $\Lim$-learning every $L \in \CalL$ from canonical informant. The latter means that for every canonical informant $I$ for some $L \in\CalL$ we have $\Conv(M,I)$, $\SDec(M,I)$ and $\Lim(M,I)$.

\medskip
Note that it is also conventional to require $M$'s hypothesis sequence to fulfill certain learning restrictions, not asking for the success of the learning process.
For instance, we are going to show that there is a collection of languages $\CalL$ such that:
\begin{itemize}
\item there is a learner which behaves consistently on all $L\in\CalL$ and $\Lim$-learns all of them, for short $\CalL \in [\Inf\Cons\Lim]$.
\item there is no learner which $\Lim$-learns every $L \in \CalL$ and behaves consistently on all languages, for short $\CalL \notin [\Cons\Inf\Lim]$.
\end{itemize}
The existence of $\CalL$ is implicit when writing $[\Cons\Inf\Lim] \subsetneq [\Inf\Cons\Lim]$.

\bigskip
This notation makes it also possible to distinguish the mode of information presentation. If the learner observes the language as solely positive information, we write $[\alpha\Txt\beta]$ for the collections of languages $\beta$-learnable by a learner with property $\alpha$ from text. Of course for $\alpha$ and $\beta$ the original definitions for the setting of learning
from text have to be used.
All formal definitions for learning from text can be found in \cite{kotzing2014map}.

\subsection{Delayability}

We now introduce a property of learning restrictions and learning success criteria, which allows general observations, not bound to the setting of $\Lim$-learning, since it applies to all of the learning restrictions introduced in Definition~\ref{def:learningrestrictions} except consistency.
%The next definition provides a property of learning restrictions playing a central role in most of our proofs, since it applies to almost all of the learning restrictions introduced in Definition \ref{def:LearningRestrictions}. 

\begin{definition}
\label{def:delayable}
Denote the set of all unbounded and non-decreasing functions by $\Simu$, i.e., $\Simu := \{ \,\simu: \N \to \N \mid \forall x \in \N \,\exists t \in \N \colon \simu(t) \geq x \text{ and } \forall t \in \N \colon \simu(t+1) \geq \simu(t) \,\}.$
Then every $\simu \in \Simu$ is a so called \emph{admissible simulating function}.

\smallskip
A predicate $\beta \subseteq \partialFn \times \CalI$
is \emph{delayable}, if for all $\simu \in \Simu$, all $I, I' \in \CalI$ and all
partial functions $M, M'\in \partFn$ holds:
Whenever we have $\ps(I'[t])  \supseteq \ps(I[\simu(t)])$, $\ng(I'[t]) \supseteq \ng(I[\simu(t)])$ and $M'(I'[t]) = M(I[\simu(t)])$ for all $t \in \N$, from $\beta(M,I)$ we can conclude $\beta(M', I')$.
\end{definition}

The unboundedness of the simulating function guarantees $\ps(I)=\ps(I')$ and \linebreak[3] $\ng(I)=\ng(I')$.

\medskip
In order to give an intuition for delayability, think of $\beta$ as a learning restriction or learning success criterion and imagine $M$ to be a learner.
Then $\beta$ is delayable if and only if it carries over from $M$ together with an informant $I$ to all learners $M'$ and informant $I'$ representing a delayed version of $M$ on $I$. 
More concretely, as long as the learner $M'$ conjectures $h_{\simu(t)}=M(I[\simu(t)])$ at time $t$ and has, in form of $I'[t]$, at least as much data available as was used by $M$ for this hypothesis, $M'$ with $I'$ is considered a delayed version of $M$ with $I$.

\bigskip
The next result guarantees that arguing with the just defined properties covers all of the considered learning restrictions but consistency.

\begin{lemma}
\label{lem:delayable}
\begin{enumerate}
\item Let $\delta \in \Delta$ be a learning restriction. Then $\delta$ is delayable if and only if $\delta \neq \Cons$.
\item Every convergence criterion $\gamma \in \Gamma$ is delayable.
\item The intersection of finitely many delayable predicates on 
$\partialFn \times \CalI$ is again delayable.
Especially, every learning success criterion $\beta = \bigcap_{i=0}^n \delta_i \cap \gamma$ with $\delta_i \in \Delta\setminus\{\Cons\}$ for all $i \leq n$ and $\gamma \in \Gamma$, $\beta$ is delayable.
\end{enumerate}
\end{lemma}
\begin{proof}
We approach $(i)$ by showing, that $\Cons$ is not delayable.
To do so, consider $\simu \in \Simu$ with $\simu(t) := \lfloor \frac{t}{2}\rfloor$, $I, I' \in \CalI$ defined by $I(x) := (\lfloor \frac{x}{2}\rfloor, \mathbb{1}_{2\N}(\lfloor \frac{x}{2}\rfloor))$ and $I'(x):=(x, \mathbb{1}_{2\N}(x))$, where $\mathbb{1}_{2\N}$ stands for the characteristic function of all even natural numbers.
By s-m-n there are learners $M$ and $M'$ such that for all $\sigma \in \Seq{(\Nttwo)}$ 
\begin{align*}
W_{M(\sigma)} &= \{ x \in \N \mid (x \text{ even }  \wedge x \leq \lfloor \frac{|\sigma|}{2} \rfloor) \vee (x \text{ odd }  \wedge x > \lfloor \frac{|\sigma|}{2} \rfloor)\} \\
W_{M'(\sigma)} &= \{ x \in \N \mid (x \text{ even } \wedge x \leq \lfloor \frac{|\sigma|}{4} \rfloor) \vee (x \text{ odd } \wedge x > \lfloor \frac{|\sigma|}{4} \rfloor)\}.
\end{align*}
Further, $\Cons(M,I)$ is easily verified since for all $t \in \N$
\begin{align*}
\ps(I[t]) &= \{ x \in \N \mid x \text{ even } \wedge x \leq \lfloor \frac{t-1}{2} \rfloor\} \subseteq W_{M(I[t])} \\
\ng(I[t]) &= \{ x \in \N \mid x \text{ odd } \wedge x \leq \lfloor \frac{t-1}{2} \rfloor)\} \subseteq \N\setminus W_{M(I[t])} \\
\intertext{but on the other hand $\neg \Cons(M',I')$ since for all $t > 2$}
\ps(I'[t]) &= \{ x \in \N \mid x \text{ even } \wedge x < t\} \\
&\not \subseteq \{ x \in \N \mid (x \text{ even }  \wedge x \leq \lfloor \frac{t}{4} \rfloor) \vee (x \text{ odd }  \wedge x > \lfloor \frac{t}{4} \rfloor)\} = W_{M'(I'[t])}.
\end{align*}

The remaining proofs for $(i)$ and $(ii)$ are straightforward. Basically, for $\Dec,$ $\SDec,$ $\SMon$ and $\Caut$, the simulating function $\simu$ being non-decreasing and $M'(I'[t])=M(I[\simu(t)])$ for all $t \in \N$ would suffice, while for $\NU, \SNU$ and $\Mon$ one further needs that the informant $I$ and $I'$ satisfy $\ps(I)=\ps(I')$.
The proof for $\WMon$ and $\Conv$ to be delayable, requires all assumptions, but $\simu$'s unboundedness.
Last but not least, in order to prove that every convergence criterion $\gamma = \Lim^a_b$, for some $a \in \N\cup\{\ast\}$ and $b \in \N_{>0}\cup\{\ast,\infty\}$, carries over to delayed variants, one essentially needs both characterizing properties of $\simu$ and of course $M'(I'[t])=M(I[\simu(t)])$. Finally, $(iii)$ is obvious.
\end{proof}

\section{Delayability vs. Consistency: \\ Canonical informant and Totality}
\label{sec:WlogTotalMethodical}
In order to facilitate smooth proofs later on, we discuss normal forms for learning from informant. First, we consider the notion of set-drivenness. %, which restricts the set of admissible learners to those not considering any information about the presentation but the set of samples. 
In Lemma~\ref{lem:canInfSdDelayableEx} we show for delayable learning success criteria, that every collection of languages that is learnable from canonical informant is also learnable by a set-driven learner from arbitrary informant. By Proposition~\ref{prop:canInfConsEx} this does not hold for consistent $\Lim$-learning. This also implies that consistency is a restriction when learning from informant. Moreover, in Lemma~\ref{lem:TotalInfDelayableEx} we observe that only considering total learners does not alter the learnability of a collection of languages in case of a delayable learning success criterion.
This does not hold for consistent $\Lim$-learning by Proposition~\ref{prop:TotalInfConsEx}.

\subsection{Set-driven Learners and Canonical informant}

We start by formally capturing the intuition for a learner being set-driven, given in the introduction.

\begin{definition}[\cite{Wex-Cul:b:80}]
\label{def:Restrictions}
A learner $M$ is \emph{set-driven}, for short $\Sd(M)$, if for all $\sigma, \tau \in \Seq{\Nttwo}$
$$(\,\ps(\sigma)=\ps(\tau) \wedge \ng(\sigma)=\ng(\tau)\,) \;\Rightarrow\; M(\sigma)=M(\tau).$$
\end{definition}

\cite{Sch:th:84} and \cite{Ful:th:85} showed that set-drivenness is a restriction when learning only from positive information and also the relation between the learning restrictions differ as observed in \cite{kotzing2016map}.

\medskip
In the next Lemma we observe that, by contrast, set-drivenness is not a restriction in the setting of learning from informant. Concurrently, we generalize \cite{Gol:j:67}'s observation, stating that considering solely canonical informant to determine learning success does not give more learning power, to arbitrary delayable learning success criteria.

\begin{lemma}
\label{lem:canInfSdDelayableEx}
Let $\beta$ be a delayable learning success criterion. Then %every language collection $\CalL$ that is $\beta$-learnable by a learner from \emph{canonical informant} can also be $\beta$-learned by a \emph{set-driven} learner from \emph{arbitrary informant}, i.e.,
\begin{equation*}
        [\Inf_\can\beta]=[\Sd\Inf\beta].
\end{equation*}
\end{lemma}
\begin{proof}
Clearly, we have $[\Inf_\can\beta]\supseteq[\Sd\Inf\beta]$.
For the other inclusion, let $\CalL$ be $\beta$-learnable by a learner $M$ from canonical informant.
We proceed by formally showing that rearranging the input on the initial segment of $\N$, we already have complete information about at that time, is an admissible simulation in the sense of Definition~\ref{def:delayable}.
Let $L \in \CalL$ and $I' \in \CalI(L)$.
For every $f \in \IfiSeq{(\Nttwo)}$, thus especially for $I'$ and all its initial segments, we define $\simu_{f} \in \Simu$ for all $t$ for which $f[t]$ is defined, by
$$\simu_{f}(t)=\sup\{ x \in \N \mid \forall w < x \colon w \in \ps(f[t])\cup\ng(f[t]) \},$$
i.e., the largest natural number $x$ such that for all $w < x$ we know, whether $w \in \ps(f)$.
In the following $f$ will either be $I'$ or one of its initial segments, which in any case ensures $\ps(f[t]) \subseteq L$ for all appropriate $t$.
By construction, $\simu_{f}$ is non-decreasing and if we consider an informant $I$, since $\ps(I)\cup\ng(I)=\N$, $\simu_I$ is also unbounded.
In order to employ the delayability of $\beta$,
we define an operator $\Meth\colon \IfiSeq{(\Nttwo)} \to \IfiSeq{(\Nttwo)}$ such that for every $f \in \IfiSeq{(\Nttwo)}$ in form of $\Meth(f)$ we obtain a canonically sound version of $f$.
$\Meth(f)$ is defined on all $t < \simu_f(|f|)$ in case $f$ is finite and on every $t \in \N$ otherwise by
$$\Meth(f)(t):=\begin{cases}
(t,0), &\text{if } (t,0) \in \ran(f);\\
(t,1), &\text{otherwise.}
\end{cases}$$
Intuitively, in $\Meth(f)$ we sortedly and without repetitions sum up all information contained in $f$ up to the largest initial segment of $\N$, $f$ without interruption informs us about.
For a finite sequence $\sigma$ the canonical version $\Sigma(\sigma)$ has length $\simu_\sigma(|\sigma|)$.
Now consider the learner $M'$ defined by
$$M'(\sigma)=M(\Meth(\sigma)).$$
Since $I := \Meth(I')$ is a canonical informant for $L$, we have $\beta(M,I)$.
Moreover, for all $t \in \N$ holds $\ps(I[\simu_{I'}(t)]) \subseteq \ps(I'[t])$ and $\ng(I[\simu_{I'}(t)]) \subseteq \ng(I'[t])$ by the definitions of $\simu_{I'}$ and of $I$ using $\Meth$.
Finally, $$M'(I'[t])=M(\Meth(I'[t]))=M(\Meth(I')[\simu_{I'}(t)])=M(I[\simu_{I'}(t)])$$ and the delayability of $\beta$ yields $\beta(M',I')$.
\end{proof}

Therefore, while considering delayable learning from informant, looking only at canonical informant already yields the full picture also for set-driven learners.
Clearly, the picture is also the same for so-called \emph{partially set-driven learners} that base their hypotheses only on the set and the number of samples.

%\medskip
%Note that the construction of the canonical sound version $\Sigma(I)$ for an informant $I$ corresponds to the construction of the corresponding one-one text $T_{1-1}$ for an arbitrary text $T$, as repeatedly employed in \cite{KSS17}.
%Clearly, a similar result can be obtained, when learning recursive functions from their graphs being presented in the canonical or an arbitrary order.

\bigskip
The next proposition answers the arising question, whether Lemma \ref{lem:canInfSdDelayableEx} also holds, when requiring the non-delayable learning restriction of consistency, negatively.

\medskip
$H$ denotes the halting problem.

\begin{proposition}
\label{prop:canInfConsEx}
For $\CalL := \{ 2 H \cup 2(H\cup\{x\})+1 \mid x \in \N \}$
holds
\begin{equation*}
\CalL \in [\totalCp\Inf_\can\Cons\Conv\SDec\SMon\Ex]\setminus[\Inf\Cons\Ex].
\end{equation*}
Particularly, $[\Inf\Cons\Ex] \subsetneq [\Inf_{\can}\Cons\Ex]$.
\end{proposition}
\begin{proof}
Let $p:\N\to\N$ be computable such that $W_{p(x)}=2 H \cup 2(H\cup\{x\})+1$ for every $x\in\N$ and let $h$ be an index for $2H\cup2H+1$. Consider the total learner $M$ defined by
\begin{equation*}
M(\sigma)=\begin{cases}
p(x), &\text{if } x \text{ with } 2x\in\ng(\sigma) \text{ and } 2x+1\in\ps(\sigma) \text{ exists}; \\
h, &\text{otherwise}
\end{cases}
\end{equation*}
for every $\sigma \in \Seq{(\Nttwo)}$.
Clearly, $M$ conservatively, strongly decisively and strongly monotonically $\Lim$-learns $\CalL$ from informant and on canonical informant for languages in $\CalL$ it is consistent.

\smallskip
Now, assume there is a learner $M$ such that $\CalL \in \Inf\Cons\Lim(M)$.
By Lemma~\ref{lem:lockingsequence} there is a locking sequence $\sigma$ for $2H\cup2H+1$.
By s-m-n there is a computable function 
\begin{align*}
\chi(x) = 
\begin{cases}
1, &\text{if } M(\sigma)= M(\sigma \concat (2x+1,1));\\
0, &\text{otherwise}.
\end{cases}
\end{align*}
By the consistency of $M$ on $\CalL$, we immediately obtain that $\chi$ is the characteristic function for $H$, a contradiction.
\end{proof}

Note, that there must not be an indexable family witnessing the difference stated in the previous proposition, since every indexable family is consistently and conservatively $\Lim$-learnable by enumeration.

\medskip
Further, \emph{request informant for $M$ and $L$} are introduced in \cite{Gol:j:67}. As the name already suggests, there is an interaction between the learner and the informant in the sense that the learner decides, about which natural number the informant should inform it next. His observation $[\Inf\Lim]=[\Inf_\can\Lim]=[\Inf_{\mathrm{req}}\Lim]$ seems to hold true when facing arbitrary delayable learning success criteria, but fails in the context of the non-delayable learning restriction of consistency.

\bigskip
Since %the collection of languages 
$\CalL$ in Proposition~\ref{prop:canInfConsEx} lies in $[\Inf_\can\Lim]$, which by Lemma~\ref{lem:canInfSdDelayableEx} equals $[\Inf\Lim]$, we gain that for learning from informant consistent $\Lim$-learning is weaker than $\Lim$-learning, i.e., 
$[\Inf\Cons\Lim]\subsetneq[\Inf\Lim].$

\bigskip
We now show that, as observed for learning from text in \cite{Jai-Osh-Roy-Sha:b:99:stl2}, a consistent behavior regardless learning success cannot be assumed in general, when learning from informant.

\begin{proposition}
\label{TotalCons}
For $\CalL := \{\,\N,H\,\}$ holds
$$\CalL \in [\totalCp\Inf\Cons\Conv\SDec\Lim]\setminus[\Cons\Inf\Lim].$$
In particular, $[\Cons\Inf\Lim] \subsetneq [\Inf\Cons\Lim]$.
\end{proposition}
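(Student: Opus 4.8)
The plan is to treat the two assertions separately. For the positive part I fix indices $n$ and $h$ with $W_n = \N$ and $W_h = H$ and define the total learner
\[
M(\sigma) = \begin{cases} h, & \text{if } \ng(\sigma) \neq \varnothing; \\ n, & \text{otherwise,}\end{cases}
\]
for all $\sigma \in \Seq{(\Nttwo)}$. An informant for $\N$ never presents negative data, while an informant for $H$ must eventually present some (as $H$ is co-infinite); hence $M$ outputs $n$ throughout on the former and switches exactly once from $n$ to $h$ at the first negative datum on the latter, which yields $\Lim$-convergence to the correct index in both cases. Consistency on $\CalL$ is immediate: while $\ng(\sigma)=\varnothing$ the guess $n$ with $W_n=\N$ is trivially consistent, and once $\ng(\sigma)\neq\varnothing$ on an $H$-informant the guess $h$ satisfies $\ps(\sigma)\subseteq H$ and $\ng(\sigma)\subseteq\N\setminus H$. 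The single mind change on an $H$-informant occurs precisely when the incoming negative datum renders $W_n=\N$ inconsistent, giving conservativeness; and since $M$ changes its hypothesis at most once along any informant while the two hypothesized languages $\N$ and $H$ are distinct, strong decisiveness holds as well. Thus $\CalL \in [\totalCp\Inf\Cons\Conv\SDec\Lim]$.

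For the negative part, suppose toward a contradiction that some learner $M$ is consistent on every informant and $\Lim$-learns $\CalL$. The crucial observation is that a locking sequence for $\N$ would be useless, since every positive datum is consistent with $\N$; instead I apply Lemma~\ref{locksequ} to obtain a $\Lim$-locking sequence $\sigma$ for $M$ on $H$ together with the index $e := M(\sigma)$, for which $W_e = H$ and $M(\sigma\concat\tau)=e$ for every $\tau$ with $\Comp(\tau,H)$. I then claim that, for every $x\notin\ng(\sigma)$,
\[
x \in H \quad\Longleftrightarrow\quad M(\sigma\concat(x,1)) = e.
\]
If $x\in H$ then $(x,1)$ is consistent with $H$, so the locking property forces $M(\sigma\concat(x,1))=e$. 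If $x\notin H$, then $\sigma\concat(x,1)$ is still a legitimate informant segment and hence extends to an informant, so global consistency guarantees that $M(\sigma\concat(x,1))$ is defined with $x\in W_{M(\sigma\concat(x,1))}$; since $x\notin H=W_e$, this forces $M(\sigma\concat(x,1))\neq e$.

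The claimed equivalence turns the fixed data $\sigma$ and $e$ into a decision procedure for $H$: the finite set $\ng(\sigma)$ lies in $\N\setminus H$ and can be checked directly, while for the remaining $x$ one computes the (defined) value $M(\sigma\concat(x,1))$ and compares it with $e$. This contradicts the undecidability of $H$, so $\CalL\notin[\Cons\Inf\Lim]$. The strict inclusion $[\Cons\Inf\Lim]\subsetneq[\Inf\Cons\Lim]$ is then immediate, since global consistency trivially entails consistency on $\CalL$, whence $[\Cons\Inf\Lim]\subseteq[\Inf\Cons\Lim]$, while $\CalL$ separates the two classes. I expect the main obstacle to be the correct choice of locking sequence: one must lock on $H$ rather than on $\N$ so that probing with a single positive datum $(x,1)$ distinguishes membership, and one must invoke global consistency precisely to guarantee both that $M$ is defined on these probes and that it excludes non-members, which is exactly what converts the learner into a decision procedure for $H$.
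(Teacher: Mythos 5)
Your proposal is correct and follows essentially the same route as the paper: the same two-valued total learner (guess $\N$ until negative data appears, then switch to $H$) for the positive part, and for the negative part a $\Lim$-locking sequence for $M$ on $H$ combined with global consistency to turn the probe $x \mapsto M(\sigma\concat(x,1))$ into a decision procedure for $H$. Your write-up is in fact slightly more explicit than the paper's, spelling out both directions of the equivalence and the finitely many exceptional $x\in\ng(\sigma)$.
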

\begin{proof}
Fix an index $h$ for $H$ and an index $p$ for $\N$.
The total learner $M$ with 
$$M(\sigma)=\begin{cases}
p, &\text{if } \ng(\sigma)=\varnothing; \\
h, &\text{otherwise}
\end{cases}$$
for every $\sigma \in \Seq{(\Nttwo)}$ clearly consistently, conservatively and strongly decisively $\Lim$-learns $\CalL$.

\smallskip
Aiming at the claimed proper inclusion, assume there is a consistent learner $M$ for $\CalL$ from informant.
Since $M$ learns $H$, by Lemma~\ref{lem:lockingsequence}, we gain a locking sequence $\sigma \in \Seq{(\Nttwo)}$ for $M$ on $H$, which means $\Comp(\sigma,H)$, $W_{M(\sigma)}=H$ and for all $\tau \in \Seq{(\Nttwo)}$ with $\Comp(\tau,H)$ holds $M(\sigma\concat\tau)\!\!\downarrow=M(\sigma)$.
By letting
$$\chi(x) := \begin{cases}
1, &\text{if } M(\sigma\concat (x,1))=M(\sigma); \\
0, &\text{otherwise}
\end{cases}$$
for all $x\in\N$, we can decide $H$ by the global consistency of $M$, a contradiction. 
\end{proof}

\subsection{Total Learners}

Similar to full-information learning from text we show that for delayable learning restrictions totality is not a restrictive assumption.
%This observation simplifies our main objective, namely the analysis of the pairwise relations between the delayable learning restrictions in $\Lim$-learning from informant.
Basically, the total learner simulates the original learner on the longest initial segment of the input, on which the convergence of the original learner is already visible.

\begin{lemma}
\label{lem:TotalInfDelayableEx}
Let $\beta$ be a delayable learning success criterion. Then 
\begin{equation*}
        [\Inf\beta]=[\mathcal{R}\Inf\beta].
\end{equation*}
\end{lemma}
\begin{proof}
Let $\CalL \in [\Inf\beta]$ and $M$ be a learner witnessing this. Without loss of generality we may assume that $\varnothing\in\dom(M)$.
We define the total learner $M'$ by letting
$\simu_M: \Seq{(\Nttwo)} \to \N,$
$$\sigma \mapsto \sup\{ s \in \N \mid s \leq |\sigma| \text{ and } M \text{ halts on } \sigma[s] \text{ after at most } |\sigma| \text{ steps} \}$$ and $$M'(\sigma):=M(\sigma[\simu_M(\sigma)]).$$
The convention $\sup(\varnothing)=0$ yields that $\simu_M$ is total and it is computable, since for $M$ only the first $|\sigma|$-many steps have to be evaluated on $\sigma$'s finitely many initial segments. One could also employ a Blum complexity measure here.
Hence, $M'$ is a total computable function.

In order to observe that $M'$ $\Inf\beta$-learns $\CalL$,
let $L \in \CalL$ and $I$ be an informant for $L$.
By letting $\simu(t) := \simu_M(I[t])$, we clearly obtain an unbounded non-decreasing function, hence $\simu \in \Simu$.
Moreover, for all $t\in\N$ from $\simu(t) \leq t$ immediately follows
\begin{align*}
\ps(I[\simu(t)]) &\subseteq \ps(I[t]), \;
\ng(I[\simu(t)]) \subseteq \ng(I[t]) \;\text{ as well as }\\
M'(I[t])&=M(I[\simu_M(I[t])])=M(I[\simu(t)]).
\end{align*}
By the delayability of $\beta$ and with $I'=I$, we finally obtain $\beta(M',I)$.
\end{proof}

\medskip
By the next proposition also for learning from informant requiring the learner to be total is a restrictive assumption for the non-delayable learning restriction of consistency. For learning from text this was observed in \cite{wiehagen1995learning} and generalized to $\delta$-delayed consistent learning from text in \cite{akama2008consistent}.

\begin{proposition}
\label{prop:TotalInfConsEx}
There is a \emph{collection of decidable languages} witnessing
$$[\totalCp\Inf\Cons\Lim] \subsetneq [\Inf\Cons\Lim].$$
\end{proposition}
\begin{proof}
Let $o$ be an index for $\varnothing$ and define for all $\sigma \in \Seq{(\Nttwo)}$ the learner $M$ by
$$
M(\sigma) := \begin{cases}
o, &\text{if } \ps(\sigma)=\varnothing; \\
%\varphi_{\max(\ps(\sigma))}(0), &\text{otherwise.} \\
\varphi_{\max(\ps(\sigma))}(\langle\sigma\rangle), &\text{otherwise.}
\end{cases}
$$
We argue that $\CalL := \{\, L \subseteq \N \mid L \text{ is decidable and } L\in\Inf\Cons\Lim(M) \,\}$ is not consistently learnable by a total learner from informant.
Assume towards a contradiction $M'$ is such a learner.
For a sequence $\sigma$ of natural numbers we denote by $\overline{\sigma}$ the corresponding canonical finite informant sequence, ending with the highest value $\sigma$ takes.
Further, for a natural number $x$ we denote by $\mathrm{seq}(x)$ the unique element of $\Seq{(\Nttwo)}$ with $\langle\mathrm{seq}(x)\rangle=x$.
Then by 1-1 ORT there are $e, z \in \N$ and functions $a, b: \Seq{\N} \to \N$, such that
\begin{equation}
\label{eq:abincreasing}
\forall \sigma, \tau \in \Seq{\N} \: (\, \sigma \prinseg \tau \Rightarrow \max\{a(\sigma),b(\sigma)\} < \min\{a(\tau),b(\tau)\}\,),
\end{equation}
with the property that for all $\sigma \in \Seq{\N}$ and all $i \in \N$
\begin{align}
\sigma_0 &= \varnothing; \nonumber \\
\sigma_{i+1} &= \sigma_i\,\concat \begin{cases}
a(\sigma_i), &\text{if } M'(\overline{\sigma_i\:\!\concat a(\sigma_i)})\neq M'(\overline{\sigma_i}); \\
b(\sigma_i), &\text{otherwise;}
\end{cases} \label{eq:MCNewLearner} \\
W_e &= \bigcup_{i \in \N} \ps(\overline{\sigma_i}); \nonumber \\
\varphi_z(y) &= \begin{cases}
1, &\text{if } y \in \ps(\overline{\sigma_y});\\
0, &\text{otherwise;}
\end{cases} \nonumber \\
%W_{p(\sigma)} &= \ps(\overline{\sigma}); \nonumber \\
%W_{q(\sigma)} &= \ps(\overline{\sigma}) \cup \{a(\sigma)\}; \nonumber \\
\varphi_{a(\sigma)}(x)&=\begin{cases}
e, & \text{if } M'(\overline{\sigma\concat a(\sigma)})\neq  M'(\overline{\sigma}) \text{ and } \\
& \forall y \in \ps(\mathrm{seq}(x)) \: \varphi_z(y)=1 \:\wedge \\
& \forall y \in \ng(\mathrm{seq}(x)) \: \varphi_z(y)=0; \\
%q(\sigma), & \text{if } M'(\overline{\sigma\concat a(\sigma)}) =  M'(\overline{\sigma}) \text{ and } \Comp(\mathrm{seq}(x),W_{q(\sigma)}); \\
\ind(\ps(\mathrm{seq}(x))), & \text{otherwise;}
\end{cases} \nonumber \\
\varphi_{b(\sigma)}(x) &= \begin{cases}
e, & \text{if } \forall y \in \ps(\mathrm{seq}(x)) \: \varphi_z(y)=1 \:\wedge \\
& \forall y \in \ng(\mathrm{seq}(x)) \: \varphi_z(y)=0; \\
\ind(\ps(\mathrm{seq}(x))), & \text{otherwise;}
\end{cases} \nonumber
\end{align}
The operator $\Theta$ as stated in 1-1 $\ORT$ on page~\pageref{11ORT} is implicit in the equalities.
Further, $h$ is also implicitly given by $h(0)=e$, $h(1)=z$ and $a,b$ defined on all remaining even and odd numbers, respectively.
To be formally correct, the functions $a$ and $b$ rely on a computable encoding function with computable inverse mapping sequences $\sigma \in \N^{<\omega}$ to natural numbers and vice versa.

Let us now observe why the existence of such $e,z,a,b$ is contradictory.
Note that $\varphi_z$ witnesses $W_e$'s decidability by \eqref{eq:abincreasing} and with this whether $\varphi_{a(\sigma)}$ and $\varphi_{b(\sigma)}$ output $e$ or stick to $p$ depends on $\Comp(\mathrm{seq}(x),W_e)$.
Clearly, we have $W_e \in \CalL$ and thus $M'$ also $\Inf\Cons\Lim$-learns $W_e$.
By the $\Lim$-convergence there are $e', j \in \N$, where $j$ is minimal, such that $W_{e'}=W_e$ and for all $i \geq j$ we have $M'(\overline{\sigma_i})=e'$ and hence $M'(\overline{\sigma_i\:\!\concat a(\sigma_i)})= M'(\overline{\sigma_i})$ by \eqref{eq:MCNewLearner}.

We now argue that $L := \ps(\overline{\sigma_j}) \cup \{a(\sigma_j)\} \in \CalL$.
Let $I$ be an informant for $L$ and $t \in \N$.
By \eqref{eq:MCNewLearner} we observe that $M$ is consistent on $I$ as
$$M(I[t])=\varphi_{\max(\ps(I[t]))}(\langle I[t] \rangle) = \begin{cases}
e, &\text{if } \Comp(I[t],W_e); \\
%q(\sigma), &\text{else if } \max(\ps(I[t]))=a(\sigma), \\
%&M'(\overline{\sigma\concat a(\sigma)})=  M'(\overline{\sigma}) \\
%&\text{and }\Comp(I[t],W_{q(\sigma)}); \\
\ind(\ps(I[t])), & \text{otherwise.}
\end{cases}$$
Further, by the choice of $j$ as well as \eqref{eq:abincreasing} and \eqref{eq:MCNewLearner} we have
\begin{equation}
\label{eq:asigmaj}
a(\sigma_j) \notin W_e = W_{e'},
\end{equation}
and with this $W_{M(I[t])}=L$, if $\ps(I[t])=L$.

On the other hand $M'$ does not consistently learn $L$ as by the choice of $j$ we obtain $M'(\overline{\sigma_j\!\,\concat a(\sigma_j)})=M'(\overline{\sigma_j})=e'$ and $\neg \Comp(\overline{\sigma_j\!\,\concat a(\sigma_j)},W_{e'})$ by \eqref{eq:asigmaj}, a contradiction.
\end{proof}

\section{Relations between Delayable Learning Success Criteria}
\label{sec:map_delayable}
In order to reveal the relations between the delayable learning restrictions in $\Lim$-learning from informant, we provide a regularity property of learners, called \emph{syntactic decisiveness}, for $\Lim$-learning in Lemma~\ref{lem:InfSynDecEx}.

Most importantly, in Proposition~\ref{GWlogConvSdec} we acquire that conservativeness and strongly decisiveness do not restrict informant learning. After this, Propositions~\ref{InfCautNotMonEx} and~\ref{InfMonNotCautEx} provide that cautious and monotonic learning are incomparable, implying that both these learning settings are strictly stronger than strongly monotonic learning and strictly weaker than unrestricted learning. The overall picture is summarized in Figure~\ref{MapInfDelayableEx} and stated in Theorem~\ref{InfExMap}.

\subsection{Syntactically Decisive Learning}

A further beneficial property, requiring a learner never to \emph{syntactically} return to an abandoned hypothesis, is supplied.

\begin{definition}[\cite{kotzing2016map}]
\label{def:SynDec}
Let $M$ be a learner, $L$ a language and $I$ an informant for $L$.
We write
\begin{enumerate}
\item[] $\SynDec(M,I)$, if $M$ is \emph{syntactically decisive} on $I$, i.e.,
$$\forall r,s,t \colon (r \leq s \leq t \wedge h_r = h_t) \Rightarrow h_r = h_s.$$
\end{enumerate}
\end{definition}

The following easy observation shows that this variant of decisiveness can always be assumed in the setting of $\Lim$-learning from informant. This is employed in the proof of our essential Proposition \ref{GWlogConvSdec}, showing that conservativeness and strong decisiveness do not restrict $\Lim$-learning from informant.

\begin{lemma}
\label{lem:InfSynDecEx}
We have $[\Inf\Lim]=[\SynDec\Inf\Lim]$.
\end{lemma}
\begin{proof}
Since obviously $[\SynDec\Inf\Lim]\subseteq[\Inf\Lim]$,
it suffices to show that every $\Inf\Lim$-learnable collection of languages is also $\SynDec\Lim$-learnable from informant.
For, let $\CalL \in [\Inf\Lim]$ and $M$ witnessing this. In the definition of the learner $M'$, we make use of a one-one computable padding function $\pad: \N \!\times\! \N \to \N$ such that $W_p = \dom(\varphi_p) = \dom(\varphi_{\pad(p,x)}) = W_{\pad(p,x)}$ for all $p,x \in \N$.
Now, consider $M'$ defined by
$$M'(\sigma):= \begin{cases}
\pad(M(\sigma),|\sigma|), & \text{if } M(\sigma^-)\neq M(\sigma); \\
M'(\sigma), & \text{otherwise.}
\end{cases}
$$
$M'$ behaves almost like $M$ with the crucial difference, that whenever $M$ performs a mind change, $M'$ semantically guesses the same language as $M$ did, but syntactically its hypothesis is different from all former ones. %This is guaranteed by the length-functions injectivity on the set of all initial segments of $I$, namely $\{I[t]\mid t \in \N\}$, and since $\pad$ is one-one too.
The padding function's defining property and the assumption that $M$ $\Inf\Lim$-learns $\CalL$ immediately yield the $\SynDec\Inf\Lim$-learnability of $\CalL$ by $M'$.
\end{proof}

Note that $\SDec$ implies $\SynDec$, which is again a delayable learning restriction. % and therefore also afsoep.
Thus by Lemma~\ref{lem:canInfSdDelayableEx}, in the proof of Lemma~\ref{lem:InfSynDecEx} we could have also restricted our attention to canonical informant.
It is further easy to see that Lemma~\ref{lem:InfSynDecEx} also holds for all other convergence criteria introduced and the simulation does not destroy any of the learning restrictions introduced in Definition~\ref{def:learningrestrictions}.

\subsection{Conservative and Strongly Decisive Learning}

The following proof for $\Conv\SDec\Lim$-learning being equivalent to $\Lim$-learning from informant builds on the normal forms of canonical presentations and totality provided in Section~\ref{sec:WlogTotalMethodical} as well as the regularity property introduced in the last subsection.

\begin{proposition}
\label{GWlogConvSdec}
We have $[\Inf\Lim]=[\Inf\Conv\SDec\Lim]$.
\end{proposition}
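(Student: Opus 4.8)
The plan is to show $[\Inf\Lim] \subseteq [\Inf\Conv\SDec\Lim]$ (the reverse inclusion being trivial) by starting from an arbitrary $\Lim$-learner for a collection $\CalL$ and transforming it into one that is both conservative and strongly decisive. By Lemma~\ref{MethInfWlogInfSdDelayableEx}, Lemma~\ref{InfWlogTotalDelayableEx}, and Lemma~\ref{InfWlogSynDecEx}, since $\Lim$-learning is a delayable success criterion, I may assume without loss of generality that the given learner $M$ is \emph{total}, operates on \emph{canonical informants}, and is \emph{syntactically decisive}. This is exactly why those three normal forms were set up beforehand, and securing these simplifying hypotheses is the first step.

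**Next I would** design the new learner $M'$ so that it only issues a fresh hypothesis when forced to by inconsistency, and never semantically returns to an abandoned conjecture. The natural strategy is for $M'$, on canonical input $I[t]$, to retain its previous hypothesis as long as that hypothesis is still consistent with the data seen so far; when consistency fails, $M'$ must move to a new conjecture. The delicate point is \emph{which} new conjecture to adopt: to guarantee conservativeness together with strong decisiveness, $M'$ should not simply copy $M$'s current guess, because $M$ may oscillate and thereby force semantic returns. Instead I would have $M'$ track $M$'s eventual behaviour and output a \emph{patched} language — typically the language $W_{M(I[t])}$ intersected or corrected so that it remains consistent with $I[t]$, encoded via an $\ORT$- or s-m-n-constructed index that enumerates $W_{M(I[s])}$ for the relevant $s$ but is guarded to drop elements contradicted by negative data. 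Because the informant is canonical and $M$ converges on it, the consistency guard only triggers finitely often, and each triggered hypothesis is genuinely new (use padding as in Lemma~\ref{InfWlogSynDecEx} to keep distinct stages syntactically distinct).

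**The hard part will be** simultaneously achieving $\Conv$ and $\SDec$ while preserving \emph{correctness} in the limit. Conservativeness forbids a mind change unless the current hypothesis is inconsistent, so I must verify that whenever $M'$ does change, the old hypothesis was indeed inconsistent with $I[t]$ — this requires that the patched languages are built so that inconsistency is detectable and that $M'$ never changes gratuitously. Strong decisiveness additionally forbids ever returning, even semantically, to any previously conjectured language; the main obstacle is ruling out that two distinct stages accidentally enumerate the \emph{same} set. I would resolve this by exploiting the increasing information content on a canonical informant: each successive guarded hypothesis is forced to be consistent with strictly more data, so its positive/negative content strictly refines the earlier ones, making a semantic return impossible. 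Establishing this monotone-refinement invariant, and checking that in the limit $M'$ settles on a single index for the true target language $L$ (using that $M$ eventually stabilises on a correct index and that the guard becomes permanently satisfied), is where the real work lies.

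**Finally I would** assemble these facts: $M'$ is total and computable by construction, changes its mind only on inconsistency (hence $\Conv(M',I)$), never revisits a semantic class (hence $\SDec(M',I)$), and converges to a correct hypothesis because $M$ does and the correction never damages the limit (hence $\Lim(M',I)$). Since these hold for every canonical informant for every $L \in \CalL$, one concludes $\CalL \in [\Inf_\can\Conv\SDec\Lim]$, and then invoking Lemma~\ref{MethInfWlogInfSdDelayableEx} once more (as $\Conv\SDec\Lim$ is delayable by Lemma~\ref{DelayableAofsetLemma}) upgrades this to $\CalL \in [\Inf\Conv\SDec\Lim]$, completing the argument.
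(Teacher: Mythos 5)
Your high-level architecture matches the paper exactly: reduce via Lemmas~\ref{MethInfWlogInfSdDelayableEx}, \ref{InfWlogTotalDelayableEx} and \ref{InfWlogSynDecEx} to a total, syntactically decisive learner judged on canonical informants, build a simulating learner that changes its mind only on inconsistency, and lift back by delayability. However, the core construction is where the proof actually lives, and the one concrete mechanism you propose --- patching $W_{M(I[t])}$ by dropping elements contradicted by \emph{negative} data so that it stays consistent with the current input --- attacks the wrong side of the problem. The genuine obstruction to conservativeness is \emph{positive} over-commitment: if $M'$'s current hypothesis already happens to contain the new positive datum on which $M$ performs its next mind change, then that hypothesis is still consistent and $\Conv$ forbids $M'$ from following $M$, leaving $M'$ stuck on a possibly wrong guess. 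No amount of deleting negatively-contradicted elements fixes this. The paper's solution is to \emph{under}-enumerate: the hypothesis $W_{p(\sigma)}=\bigcup_t A^t_\sigma$ admits an element $x$ of $W_{M(\sigma)}$ only after verifying that $M$, when fed the canonical informant of the initial subset of $W_{M(\sigma)}$ up to $x$, still outputs $M(\sigma)$ (this is the role of $\mathcal{X}^t_\sigma$ and $\mathbb{r}^t_\sigma$). That guarantee --- any datum $M$ will later change its mind on is absent from $M'$'s current hypothesis, so inconsistency is witnessed by a missing positive element --- is the missing idea; your phrase ``track $M$'s eventual behaviour'' gestures at it but supplies no computable mechanism.

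Two further steps you wave through are in fact the delicate ones. First, under-enumeration threatens $\Lim$-correctness: you must show the limiting poisoned hypothesis is not a \emph{proper} subset of $L$, which the paper does by showing that for the final, locking-like segment the sets $\mathcal{X}^t_\sigma(A^t_\sigma)$ are nonempty infinitely often so that all of $L$ is eventually admitted; ``the correction never damages the limit'' is exactly the claim that needs proof. Second, your argument for $\SDec$ --- that each successive hypothesis is ``consistent with strictly more data, so its content strictly refines the earlier ones'' --- is not valid as stated: being consistent with more data does not prevent two stages from enumerating the same set. The correct route is that $p$ is one-one and a mind change of $M'$ is only triggered by a verified inconsistency of the abandoned hypothesis, which (together with $M$'s syntactic decisiveness) rules out a semantic return.
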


\begin{proof}
Obviously $[\Inf\Lim]\supseteq[\Inf\Conv\SDec\Lim]$ and
by the Lemmas \ref{lem:canInfSdDelayableEx}, \ref{lem:TotalInfDelayableEx} and \ref{lem:InfSynDecEx} it suffices to show
$[\totalCp\SynDec\Inf\Lim]\subseteq[\Inf_\can\Conv\SDec\Lim]$.

In the following for every set $X$ and $t \in \N$, let $X[t]$ denote the canonical informant sequence of the first $t$ elements of $\N$.

\smallskip
Now, let $\CalL \in [\totalCp\SynDec\Inf\Lim]$ and $M$ a learner witnessing this. In particular, $M$ is total and on informant for languages in $\CalL$ we have that $M$ never returns to a withdrawn hypothesis.
We want to define a learner $M'$ which mimics the behavior of $M$, but modified such that, if $\sigma$ is a locking sequence, then the hypothesis of $M'$ codes the same language as the guess of $M$.
However, if $\sigma$ is not a locking sequence, then the language guessed by $M'$ should not include data that $M$ changes its mind on in the future.
Thus, carefully in form of a recursively defined $\subseteq$-increasing sequence $(A^t_\sigma)_{t\in\N}$ in the guess of $M'$ we only include the elements of the hypothesis of $M$ that do not cause a mind change of $M$ when looking more and more computation steps ahead.
The following formal definitions make sure, this can be done in a computable way.

\smallskip
For every $\sigma \in \Seq{(\Nttwo)}$, $t \in\N$ with $t\geq |\sigma|$ and $D \subseteq W^t_{M(\sigma)}$, we let
$$
\mathbb{r}_\sigma^t(D) = \min \set{|\sigma|\leq r \leq t}{D \subseteq W^r_{M(\sigma)}}.
$$

Moreover, we define\footnote{We suppose $\inf(\emptyset) = \infty$ for convenience.}
\begin{align*}
\mathcal{X}_\sigma^t(D) = \{\,X \subseteq W^t_{M(\sigma)} \mid\: &\max(X) < \inf(W^t_{M(\sigma)} \setminus X) \text{, } D \subsetneq X \text{ and }\\& M(\sigma) = M(\,W^t_{M(\sigma)}[\,\mathbb{r}_\sigma^t(X)+1\,]\,)\,\}.
\end{align*}
In the following we abbreviate $X \subseteq W^t_{M(\sigma)}$ and $\max(X) < \inf(W^t_{M(\sigma)} \setminus X)$ by $X \inseg W^t_{M(\sigma)}$ and say that $X$ is an initial subset of $W^t_{M(\sigma)}$.

\smallskip
Aiming at providing suitable hypotheses $p(\sigma)$ for the conservative strongly decisive learner $M'$, given $\sigma$, we carefully enumerate more and more elements included in $W_{M(\sigma)}$.
We are going to start with the positive information provided by $\sigma$. Having obtained $A^t_{\sigma}$ with $\mathcal{X}^t_\sigma(A^t_\sigma)$ we have a set at hand that contains all initial subsets $X$ of $W^t_{M(\sigma)}$ strictly incorporating $A^t_\sigma$, for which $M$ does not differentiate between $\sigma$ and the appropriate initial segment $W^t_{M(\sigma)}[\,\mathbb{r}_\sigma^t(X)+1\,]$ of the canonical informant of $M$'s guess on $\sigma$.
Thus $\mathcal{X}^t_\sigma(A^t_\sigma)$ contains our candidate sets for extending $A^t_\sigma$.
The length $\mathbb{r}_\sigma^t(X)+1$ of the initial segment is minimal such that $X$ is a subset of $W^{\mathbb{r}_\sigma^t(X)}_{M(\sigma)}$ and at least $|\sigma|$ to assure $\Lim$-convergence of the new learner.

For an arbitrary $\sigma \in \Seq{(\Nttwo)}$ this reads as follows
\begin{align*}
A^0_\sigma & = \ps(\sigma);\\
\forall t \in \natnum: A^{t+1}_\sigma & = 
\begin{cases}
W_{M(\sigma)}^t,														
& \text{if } \ng(\sigma) \cap A_{\sigma}^t \neq \emptyset;\\
\max_{\subseteq} \mathcal{X}_\sigma^t( A^t_\sigma),	& \text{else if } \mathcal{X}_\sigma^t( A^t_\sigma ) \neq \emptyset;\\
A^t_\sigma,																& \mbox{otherwise.}
\end{cases}
\end{align*}
Furthermore, using s-m-n, we define $p: \Seq{(\Nttwo)} \to \N$ as a one-one function, such that for all $\sigma\in \Seq{(\Nttwo)}$
\begin{equation} \label{eq:DefHypConvLearner}
W_{p(\sigma)} = \bigcup_{t \in \natnum} A^t_\sigma.
\end{equation}
In the following, for all $\tau \in \Seq{(\Nttwo)}$ we denote by $\tau'$ the largest initial segment of $\tau$ for which $M'(\tau')=M'(\tau)$, i.e., the last time $M'$ performed a mind change.
Finally, we define our new learner $M'$ by
$$
M'(\sigma) =
\begin{cases}
p(\sigma), 		&\text{if } |\sigma|=0;\\
p(\sigma), 		&\text{else if } M((\sigma^-)')\neq M(\sigma) \wedge \neg \Comp(\sigma, A^{|\sigma|}_{(\sigma^-)'});\\
M'(\sigma^-),	&\text{otherwise.}
\end{cases}
$$
That is, $M'$ follows the mind changes of $M$ once a suitably inconsistent hypothesis has been seen. All hypotheses of $M$ are poisoned in a way to ensure that we can decide inconsistency.

\medskip
Let us first observe that $M'$ $\Lim$-learns every $L \in \Inf\Lim(M)$ from informant.
For, let $t_0$ be minimal such that, for all $t \geq t_0$, $M(L[t]) = M(L[t_0])$.
Thus, $e := M(L[t_0])$ is a correct hypothesis for $L$.

\smallskip
If $M'$ does not make a mind change in or after $t_0$, then $M'$ converged already before that mind change of $M$. Thus, let $s_0 < t_0$ be minimal such that for all $t \geq s_0$, $e' := M'(L[s_0]) = M'(L[t])$. 
As $p$ is one-one and $M$ learns syntactically decisive, we have $M(L[s_0])\neq M(L[t])$ for all $t \geq t_0$.
From $(L[t-1])'=L[s_0]$ and the definition of $M'$ we get $\Comp(L[t], A^{t}_{L[s_0]})$ for all $t\geq t_0$.
Thus, $W_{e'} = L$, because the final hypothesis $W_{e'}$ of $M'$ contains all elements of $L$ and no other by Equation~\eqref{eq:DefHypConvLearner}.

\smallskip
In case $M'$ makes a mind change in or after $t_0$, let $t_1 \geq t_0$ be the time of that mind change.
As $M$ does not perform mind changes after $t_0$, the learner $M'$ cannot make further mind changes and therefore converges to $e' := p(L[t_1])$.
By construction we have $A^t_{L[t_1]} \subseteq W_{e} = L$ for all $t\in\N$ and with it $W_{e'} \subseteq L$ by Equation~\ref{eq:DefHypConvLearner}.
Towards a contradiction, suppose $W_{e'} \subsetneq L$ and let $x \in L \setminus W_{e'}$ be minimal.
By letting $s_0$ such that $\ps(L[x])\subseteq A^{s_0}_{L[t_1]}$ and $x\in W^{s_0}_{e}$, every initial subset of $W^{s_0}_{e}$ extending $A^{s_0}_{L[t_1]}$ would necessarily contain $x$.
Therefore we have $A^s_{L[t_1]}=A^{s_0}_{L[t_1]}$ and $\mathcal{X}_{L[t_1]}^s(A^{s_0}_{L[t_1]})=\varnothing$ for all $s \geq s_0$.
We obtain the $\Lim$-convergence of $M'$ by constructing $s_2 \geq s_0$ with $\mathcal{X}_{L[t_1]}^{s_2}(A^{s_0}_{L[t_1]})\neq\varnothing$.
For this, let $y := \max (A^{s_0}_{L[t_1]} \cup \{x\})$ which implies $A^{s_0}_{L[t_1]} \subsetneq \ps(L[y+1])$.
Moreover, let $s_1 \geq t_1$ be large enough such that $L[y+1] = W_{e}^{s_1}[y+1]$.
Thus, by letting $r := \mathbb{r}^{s_1}_{L[t_1]}(\ps(L[y+1]))+1$
%from $A^{s_0}_{L[t_1]}\subseteq W^{s_1}_e$ 
we gain $r=\mathbb{r}^{s}_{L[t_1]}(\ps(L[y+1]))+1$ for all $s \geq s_1$, where the latter denotes the time window considered in the third requirement for $\ps(L[y+1]) \in \mathcal{X}_{L[t_1]}^{s}(A^{s_0}_{L[t_1]})$.
Furthermore, let $s_2 \geq s_1$ with $L[r] = W^{s_2}_{e}[r]$.
By the definition of $r$ we have $r > t_1 \geq t_0$ and gain
\begin{align*}
\ps(L[y+1]) &\inseg W^{s_2}_{e}, \quad A^{s_2}_{L[t_1]} = A^{s_0}_{L[t_1]} \subsetneq \ps(L[y+1]) \quad \text{ and } \\ M(L[t_1]) &= e = M(L[r]) = M(\,W^{s_2}_{e}[r]\,),
\end{align*}
for short $\ps(L[y+1]) \in \mathcal{X}_{L[t_1]}^{s_2}(A^{s_0}_{L[t_1]})$, implying $\mathcal{X}_{L[t_1]}^{s_2}(A^{s_0}_{L[t_1]})\neq\varnothing$.

\medskip
Now we come to prove that $M'$ is conservative on every $L \in \Inf\Lim(M)$.
For, let $t$ be such that $M'(L[t]) \neq M'(L[t+1])$.
Let $e' := M'(L[t])$ and let $t' \leq t$ be minimal such that $M'(L[t']) = e'$.
From the mind change of $M'$ we get $\neg \Comp(L[t+1], A^{t+1}_{L[t']})$.
In case it holds $\ng(L[t+1])\cap A^{t+1}_{L[t']} \neq \varnothing$, since $A^{t+1}_{L[t']}\subseteq W_{e'}$, we would immediately observe $\neg \Comp(L[t+1],W_{e'})$. Therefore, we may assume $\ps(L[t+1])\setminus A^{t+1}_{L[t']}\neq \varnothing$.
Suppose, by way of contradiction, $W_{e'}$ is consistent with $L[t+1]$, i.e.,
$\ps(L[t+1]) \subseteq W_{e'}$ and $\ng(L[t+1]) \cap W_{e'} = \varnothing$. %$\Comp(L[t+1], W_{e'})$.
Then we have $\ng(L[t+1]) \cap A^s_{L[t']} = \varnothing$ for all $s \in \N$. 
Since $\ps(L[t+1]) \subseteq W_{e'}$, there is $t_0$ minimal such that
\begin{align}
L[t+1] = A^{t_0+1}_{L[t']}[t+1]. \label{eq:insegLA}
\end{align}
We have $\ng(L[t']) \cap A^{t_0}_{L[t']} = \varnothing$ as otherwise $\neg \Comp(L[t+1],W_{e'})$.
Because $t_0$ was minimal, we have $A^{t_0}_{L[t']} \subsetneq A^{t_0+1}_{L[t']}$ and with this $A^{t_0+1}_{L[t']}\in \mathcal{X}_{L[t']}^{t_0}(A^{t_0}_{L[t']})$ by the definition of $A^{t_0+1}_{L[t']}$.
In particular, this tells us
\begin{align} \label{eq:insegAW}
A^{t_0+1}_{L[t']} &\inseg W^{t_0}_{M(L[t'])} \quad \text{ and } \\ \label{eq:noMindChange}
M(L[t']) &= M(\,W^{t_0}_{M(L[t'])}[\,\mathbb{r}_{L[t']}^{t_0}(A^{t_0+1}_{L[t']})+1\,]\,).
\end{align}
and therefore with
\begin{align*}
L[t'] \inseg L[t+1] \stackrel{\eqref{eq:insegLA}}{\inseg} A^{t_0+1}_{L[t']} \stackrel{\eqref{eq:insegAW}}{\inseg} W^{t_0}_{M(L[t'])}[\,\mathbb{r}_{L[t']}^{t_0}(A^{t_0+1}_{L[t']})+1\,]
\end{align*}
by Equation~\eqref{eq:noMindChange} and $M$'s syntactic decisivenes we get $M(L[t'])=M(L[t+1])$.
Therefore, $M'$ did not make a mind change in $t+1$, a contradiction.
\end{proof}

\subsection{Completing the Picture of Delayable Learning}

The next two propositions show that monotonic and cautious $\Lim$-learning are incomparable on the level of indexable families. With Proposition~\ref{GWlogConvSdec} this yields all relations between delayable $\Lim$-learning success criteria as stated in Theorem~\ref{InfExMap}.

\medskip
We extend the observation of \cite{STL1} for cautious learning to restrict learning power with the following result. The positive part has already been discussed in the example in the introduction.

\begin{proposition}
\label{InfMonNotCautEx}
For the \emph{indexable family} $\CalL := \{ \N\setminus X \mid X \subseteq \N \text{ finite} \}$ holds
\begin{equation*}
\CalL \in [\Inf\Mon\Lim]\setminus[\Inf\Caut\Bc].
\end{equation*}
Particularly, $[\Inf\Caut\Lim] \subsetneq [\Inf\Lim]$.
\end{proposition}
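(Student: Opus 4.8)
The plan is to establish the two memberships separately. For the positive inclusion $\CalL \in [\Inf\Mon\Lim]$ I would use the learner already sketched in the introduction, which outputs the complement of the negative data seen so far: fix a computable $g$ with $W_{g(e)} = \N \setminus F$ whenever $e = \ind(F)$ for a finite $F \subseteq \N$, and set $M(\sigma) = g(\ind(\ng(\sigma)))$, so that $W_{M(\sigma)} = \N \setminus \ng(\sigma)$. For a target $L = \N \setminus X$ with $X$ finite and any $I \in \Inf(L)$ one has $\ng(I) = X$, hence $\ng(I[t]) = X$ from some point on; since $\ind$ is canonical, $M$ then stabilizes on the single index $g(\ind(X))$ with $W_{g(\ind(X))} = L$, giving genuine $\Lim$-convergence. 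Monotonicity follows at once: for $s \le t$ we have $\ng(I[s]) \subseteq \ng(I[t]) \subseteq X$, so every hypothesis $W_{h_s} = \N \setminus \ng(I[s])$ is a \emph{superset} of $L = \ps(I)$; consequently $W_{h_s} \cap \ps(I) = L = W_{h_t} \cap \ps(I)$, which is exactly $\Mon(M,I)$. This part is entirely routine.

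For the negative part $\CalL \notin [\Inf\Caut\Lim_\infty]$ I would argue by contradiction, assuming a cautious learner $M$ that $\Lim_\infty$-learns $\CalL$. The crucial observation is that $\N = \N \setminus \varnothing$ itself lies in $\CalL$. Applying Lemma~\ref{locksequ} with $a = 0$ and $b = \infty$ yields a $\Lim_\infty$-locking sequence $\sigma$ for $M$ on $\N$; in particular $\Comp(\sigma, \N)$ — which forces $\ng(\sigma) = \varnothing$ — and $W_{M(\sigma)} = \N$. (I only need these two facts, not the full locking property.) Now pick any $x \notin \ps(\sigma)$, which exists since $\ps(\sigma)$ is finite, and set $L := \N \setminus \{x\} \in \CalL$. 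Because $\ng(\sigma) = \varnothing \subseteq \{x\}$ and $x \notin \ps(\sigma)$, the sequence $\sigma$ is consistent with $L$, so it can serve as the initial segment of some informant $I \in \Inf(L)$ that afterwards presents every element of $L$ positively and $x$ negatively.

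The contradiction then falls out of cautiousness. At time $|\sigma|$ the learner conjectures $h_{|\sigma|} = M(\sigma)$ with $W_{h_{|\sigma|}} = \N$. Since $M$ $\Lim_\infty$-learns $L$ from $I$, there is a time $t \ge |\sigma|$ with $W_{h_t} = L = \N \setminus \{x\} \subsetneq \N = W_{h_{|\sigma|}}$, which directly violates $\Caut(M,I)$. Hence no such $M$ exists. The main (and essentially only) idea is recognizing that a cautious learner is trapped here: the locking sequence forces it to commit to the oversized guess $\N$ before any negative datum about $x$ is available, yet it can never later shrink down to the strictly smaller co-finite target without breaching cautiousness; no genuine obstacle remains beyond setting this trap up cleanly. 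Finally, the stated strict inclusion $[\Inf\Caut\Lim] \subsetneq [\Inf\Lim]$ follows since $\CalL \in [\Inf\Mon\Lim] \subseteq [\Inf\Lim]$, while $[\Inf\Caut\Lim] \subseteq [\Inf\Caut\Lim_\infty]$ and the latter excludes $\CalL$.
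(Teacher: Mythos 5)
Your proposal is correct and follows essentially the same strategy as the paper: the monotonic learner guessing $\N\setminus\ng(\sigma)$ is the paper's learner verbatim, and the cautiousness trap (commit to $\N$, then remove an as-yet-unseen element) is the paper's argument as well. The only cosmetic difference is that you invoke Lemma~\ref{locksequ} to obtain the committed initial segment, whereas the paper simply takes the canonical informant $I_0$ for $\N$, a time $t_0$ with $W_{M(I_0[t_0])}=\N$, and the language $\N\setminus\{t_0\}$, whose canonical informant agrees with $I_0$ up to $t_0$.
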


\begin{proof}
In order to approach $\CalL \notin [\Inf\Caut\Bc]$,
let $M$ be a $\Inf\Bc$-learner for $\CalL$ and $I_0$ the canonical informant for $\N$.
Moreover, let $t_0$ be such that $W_{M(I_0[t_0])}=\N$.
Let $I_1$ be the canonical informant for $L_1 := \N\setminus\{t_0\}$. Since $M$ learns $L_1$, there is $t_1 > t_0$ such that $W_{M(I_1[t_1])}=L_1$. We have $I_1[t_0]=I_0[t_0]$ and hence $M$ is not cautiously learning $L_1$ from $I_1$.

\medskip
We now show the $\Mon\Ex$-learnability.
By s-m-n there is a computable function $p: \N \to \N$ such that for all finite sets $X$ holds $W_{p(\langle X\rangle)}=\N\setminus X$, where $\langle X \rangle$ denotes a canonical code for $X$ as already employed in the proof of Proposition \ref{InfCautNotMonEx}.
We define the learner $M$ by letting for all $\sigma \in \Seq{\Nttwo}$
\begin{align*}
M(\sigma)&= p(\langle \ng(\sigma) \rangle).
\end{align*}

The corresponding intuition is that $M$ includes every natural number in its guess, not explicitly excluded by $\sigma$.
Clearly, $M$ learns $\CalL$ and behaves monotonically on $\CalL$, since for every $X \subseteq \N$ finite, every informant $I$ for $\N\setminus X$ and every $t \in \N$, we have $W_{M(I[t])}\supseteq \N\setminus X$ and therefore $W_{M(I[t])}\cap \N\setminus X = \N \setminus X$.
\end{proof}

This reproves $[\Inf\SMon\Lim] \subsetneq [\Inf\Mon\Lim]$ observed in \cite{lange1996monotonic} also on the level of indexable families.

\bigskip
In the next proposition the learner can even be assumed cautious on languages it does not identify. Thus, according to Definition~\ref{def:learningcriterium} we write this success independent property of the learner on the left side of the mode of presentation.

\begin{proposition}
\label{InfCautNotMonEx}
%Let $b \in \N \cup \{\infty\}$.
For the \emph{indexable family}
$$\CalL := \{ 2X \cup (2(\N\setminus X)+1) \mid X \subseteq \N \text{ finite or } X = \N \}$$
holds $\CalL \in [\Caut\Inf\Lim]\setminus[\Inf\Mon\Bc]$.

Particularly, $[\Inf\Mon\Lim] \subsetneq [\Inf\Lim]$.
\end{proposition}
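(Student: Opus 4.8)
The plan is to treat the two halves of the statement separately, writing $L_X:=2X\cup(2(\N\setminus X)+1)$ for $X\subseteq\N$. For this coding each $n$ contributes exactly one of $2n,2n+1$ to $L_X$, with $2n\in L_X\Leftrightarrow n\in X$ and $2n+1\in L_X\Leftrightarrow n\notin X$; in particular $L_\N=2\N$, and distinct members of $\CalL$ are pairwise set-theoretically incomparable. For $\CalL\in[\Caut\Inf\Lim]$ I would, using s-m-n, define a learner that outputs a fixed index for $2\N$ as long as the data seen are consistent with $2\N$ (no odd-positive $2n+1\in\ps(\sigma)$ and no even-negative $2n\in\ng(\sigma)$), and otherwise outputs an index for $L_{X_\sigma}$ where $X_\sigma=\{n\mid 2n\in\ps(\sigma)\}$ collects the indices already confirmed to lie in $X$. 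The trigger ``inconsistent with $2\N$'' is monotone in time, so on \emph{every} informant the learner first guesses $2\N$ and then only guesses sets $L_{X_t}$ with $X_t$ non-decreasing. Cautiousness on all informants follows from two facts: each $L_{X_t}$ contains infinitely many odd numbers (as $X_t$ is finite), hence is never a subset of $2\N$; and for $X_s\subseteq X_t$ the later guess contains the evens $2(X_t\setminus X_s)$ omitted by $L_{X_s}$, so $L_{X_t}\not\subsetneq L_{X_s}$. Success is immediate: on $2\N$ the trigger never fires, and on a finite $L_X$ it eventually fires (cofinitely many odds are positive), after which $X_t$ increases to the finite $X$ and stabilises.

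For $\CalL\notin[\Inf\Mon\Lim_\infty]$ I would argue by contradiction, assuming a monotonic $\Lim_\infty$-learner $M$ for $\CalL$. Since $M$ $\Lim_\infty$-identifies $2\N$, Lemma~\ref{locksequ} yields a $\Lim_\infty$-locking sequence $\sigma$ for $M$ on $2\N$; thus $\Comp(\sigma,2\N)$ and $W_{M(\sigma\concat\tau)}=2\N$ for every $\tau$ with $\Comp(\tau,2\N)$, so in particular $W_{M(\sigma)}=2\N$ contains every even number. Put $X_0:=\{n\mid 2n\in\ps(\sigma)\}$ (finite) and let $J$ be the canonical informant for $L_{X_0}\in\CalL$ extending $\sigma$, revealing every further index as not in $X$. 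As $M$ $\Lim_\infty$-learns $L_{X_0}$, there is a stage $s\ge|\sigma|$ with $W_{M(J[s])}=L_{X_0}$. Because $L_{X_0}$ is cofinitely odd it excludes $2n$ for all $n\notin X_0$; choosing $n_0\notin X_0$ larger than every index mentioned in $J[s]$, the even $2n_0$ is excluded by $W_{M(J[s])}$ while the index $n_0$ is still unconstrained by $J[s]$.

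The contradiction then comes from flipping $n_0$ into the target. Let $L:=L_{X_0\cup\{n_0\}}\in\CalL$ and let $I$ be an informant for $L$ with $I[s]=J[s]$ (possible since $n_0$ is untouched at stage $s$) that afterwards reveals $n_0$ as lying in $X$ and all remaining indices consistently with $L$. On $I$ we have $\ps(I)=L\ni 2n_0$. At time $|\sigma|$ the hypothesis $M(\sigma)$ satisfies $W_{M(\sigma)}=2\N\ni 2n_0$, so $2n_0\in W_{M(I[|\sigma|])}\cap\ps(I)$; at time $s$ the hypothesis $M(J[s])$ satisfies $2n_0\notin W_{M(J[s])}=L_{X_0}$, so $2n_0\notin W_{M(I[s])}\cap\ps(I)$. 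As $|\sigma|\le s$, this violates $\Mon(M,I)$, the required contradiction. Together with the positive half (which gives $\CalL\in[\Inf\Lim]$ via $[\Caut\Inf\Lim]\subseteq[\Inf\Lim]$) this yields $[\Inf\Mon\Lim]\subsetneq[\Inf\Lim]$.

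The main obstacle is the negative direction, and precisely \emph{where} monotonicity breaks. Since $\Mon$ constrains conjectured positives only within $\ps(I)$, the naive clash between ``$2\N$ contains all evens'' and ``a finite $L_X$ keeps only the evens $2X$'' does not by itself violate monotonicity: the surplus evens $2(\N\setminus X)$ are not tracked and may be dropped freely. The genuine tension the construction must extract is that $\Lim_\infty$-identifying the cofinitely-odd finite languages forces $M$ to discard cofinitely many evens \emph{by a finite stage}, whereas the locking sequence pins $M$ to $2\N$—hence to every even—on all $2\N$-consistent prefixes. One must therefore observe $M$ until it has dropped a far-out even $2n_0$ and only \emph{afterwards} commit $n_0$ to $X$, so that the previously present $2n_0$ becomes a genuine positive of the final target. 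The delicate points are ensuring $I[s]=J[s]$ stays consistent with $L$ (guaranteed by keeping $n_0$ untouched at stage $s$) and that the compared times satisfy $|\sigma|\le s$.
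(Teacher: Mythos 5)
Your overall strategy coincides with the paper's: the positive half uses the same learner (default to a fixed index for $2\N$ until the data refute it, then conjecture that the even numbers seen so far are the only ones in the target), and the negative half is the same three-stage diagonalization --- pin $M$ to $2\N$, let it converge on a finite-$X$ member of $\CalL$ so that it has dropped a far-out even $2n_0$ by a finite stage $s$, then flip $n_0$ into the target so that the dropped even becomes a genuine positive datum, contradicting $\Mon$. The paper runs this with canonical informants and explicit times $t_0<t_1<t_2$ instead of invoking Lemma~\ref{locksequ}, but that difference is cosmetic.

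There is, however, one concrete flaw in your negative half: with $X_0:=\{n\mid 2n\in\ps(\sigma)\}$ the sequence $\sigma$ need not be consistent with $L_{X_0}$, so an informant $J$ for $L_{X_0}$ extending $\sigma$ need not exist. A $\Lim_\infty$-locking sequence for $M$ on $2\N$ only guarantees $\Comp(\sigma,2\N)$, so $\sigma$ may contain a negative datum $(2m+1,0)$ with $2m\notin\ps(\sigma)$; then $m\notin X_0$, hence $2m+1\in L_{X_0}$, contradicting $2m+1\in\ng(\sigma)$. (For instance $\sigma=((1,0))$ gives $X_0=\varnothing$ and $L_{X_0}=2\N+1\ni 1$.) The repair is a one-liner: set $X_0:=\{n\mid 2n\in\ps(\sigma)\}\cup\{n\mid 2n+1\in\ng(\sigma)\}$, which is still finite and makes $\Comp(\sigma,L_{X_0})$ hold. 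Everything else --- the choice of $n_0\notin X_0$ beyond all indices mentioned in $J[s]$, the consistency of $J[s]$ with $L_{X_0\cup\{n_0\}}$, and the monotonicity violation between times $|\sigma|$ and $s$ --- then goes through unchanged.
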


\begin{proof}
We first show $\CalL \notin [\Inf\Mon\Bc]$.
Let $M$ be a $\Inf\Bc$-learner for $\CalL$.
Further, let $I_0$ be the canonical informant for
$L_0 := 2\N \in \CalL$.
Then there exists $t_0$ such that $W_{M(I_0[2t_0])}=2\N$. 
Moreover, consider the canonical informant $I_1$ for
$$L_1 := 2\{0,\ldots,t_0\}\cup (2(\N\setminus\{0,\ldots,t_0\})+1)
\; \in \CalL$$
and let $t_1>t_0$ such that $W_{M(I_1[2t_1])}=L_1$.
Similarly, we let $I_2$ be the canonical informant for 
$$L_2 := 2\{0,\ldots,t_0,t_1+1\}\cup (2(\N\setminus\{0,\ldots,t_0,t_1+1\})+1)
\; \in \CalL$$
and choose $t_2>t_1$ with $W_{M(I_2[2t_2])}=L_2$.
Since $2(t_1+1) \in (L_0 \cap L_2) \setminus L_1$ and by construction $I_2[2t_0]=I_0[2t_0]$ as well as $I_2[2t_1]=I_1[2t_1]$, we obtain
$$2(t_1+1) \in W_{M(I_2[2t_0])} \cap L_2 \quad \text{ and } \quad 2(t_1+1) \notin W_{M(I_2[2t_1])} \cap L_2$$ and therefore $M$ does not learn $L_2$ monotonically from $I_2$.

\medskip
Let us now adress $\CalL \in [\Caut\Inf\Lim]$.
Fix $p \in \N$ such that $W_p = 2\N$. Further, by s-m-n there is a computable function $q: \N \to \N$ with $W_{q(\langle X\rangle)} = X \cup (2\N \setminus X)+1$, where $\langle X \rangle$ stands for a canonical code of the finite set $X$.
We define the learner $M$ for all $\sigma \in \Seq{\Nttwo}$ by
\begin{align*}
M(\sigma) &= \begin{cases}
p, & \text{if } \ps(\sigma) \subseteq 2\N; \\
q(\langle\ps(\sigma)\cap2\N \rangle), & \text{otherwise.}
\end{cases}
\end{align*}
Intuitively, $M$ guesses $2\N$ as long as no odd number is known to be in the language $L$ to be learned.
If for sure $L \neq 2\N$, then $M$ assumes that all even numbers known to be in $L$ so far are the only even numbers therein.

It is easy to verify that $M$ is computable and by construction it learns $\CalL$.
For establishing the cautiousness, let $L$ be any language, $I$ an informant for $L$ and $s \leq t$.
Furthermore, assume $W_{M(I[s])} \neq W_{M(I[t])}$.
In case $\ps(I[s]) \not \subseteq 2\N$, we have $x \in (\ps(I[t])\cap 2\N)$ with $x \notin (\ps(I[s])\cap 2\N)$ and therefore as desired
$W_{M(I[t])} \setminus W_{M(I[s])} \neq \varnothing$.
Then $\ps(I[s]) \subseteq 2\N$ implies $W_{M(I[s])}=2\N$ and thus again
$W_{M(I[t])} \setminus W_{M(I[s])} \neq \varnothing$.
\end{proof}

\smallskip
We sum up the preceding results in the next theorem and also represent them in Figure \ref{MapInfDelayableEx}.

\begin{figure}[h]
\begin{center}
\begin{minipage}{12cm}
\includegraphics[scale=1]{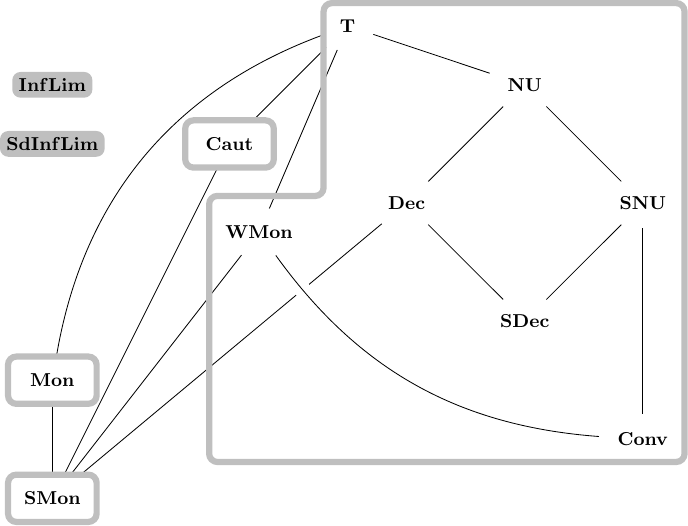}
\end{minipage}
\end{center}
\caption{Relations between delayable learning restrictions in full-information (explanatory) $\Lim$-learning of languages from informant.
The implications according to Lemma~\ref{lem:delayablebackbone} are represented as black lines from bottom to top. Two learning settings are equivalent if and only if they lie in the same grey outlined zone as stated in Theorem~\ref{InfExMap}.
}
\label{MapInfDelayableEx}
\end{figure}

\begin{theorem}
\label{InfExMap}
We have
\begin{enumerate}
\item $\forall\delta \in \{\Conv,\Dec,\SDec,\WMon,\NU,\SNU\}:$ $$[\Inf\delta\Lim]=[\Inf\Lim].$$ % \; \Leftrightarrow \; \delta \in \{\Conv, \Dec, \SDec, \WMon, \NU, \SNU \},$$
\item $[\Inf\Mon\Lim]\perp[\Inf\Caut\Lim]$.
\end{enumerate}
\begin{proof}
The first part is an immediate consequence of Proposition \ref{GWlogConvSdec} and so is the second part of the Propositions \ref{InfMonNotCautEx} and \ref{InfCautNotMonEx}.
\end{proof}
\end{theorem}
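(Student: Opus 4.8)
The plan is to assemble Theorem~\ref{InfExMap} from the results already established, since both parts are corollaries of earlier propositions together with the backbone of implications in Lemma~\ref{DelayableBackbone}. For part~\textit{(i)}, the central tool is Proposition~\ref{GWlogConvSdec}, which gives $[\Inf\Lim]=[\Inf\Conv\SDec\Lim]$. First I would observe that a single learner witnessing membership in $[\Inf\Conv\SDec\Lim]$ already satisfies all six of the restrictions in question: indeed, by Lemma~\ref{DelayableBackbone}, conservativeness yields $\WMon$ and $\SNU$, strong decisiveness yields $\Dec$ and $\SNU$, and $\Dec$ together with $\SNU$ yields $\NU$. Hence for every $\delta \in \{\Conv,\Dec,\SDec,\WMon,\NU,\SNU\}$ the very same witness learner is a $\delta$-learner, giving $[\Inf\Conv\SDec\Lim]\subseteq[\Inf\delta\Lim]$. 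Since imposing any restriction can only shrink the learnable class, $[\Inf\delta\Lim]\subseteq[\Inf\Lim]$, and the two inclusions sandwich $[\Inf\delta\Lim]$ between $[\Inf\Conv\SDec\Lim]$ and $[\Inf\Lim]$, which coincide by Proposition~\ref{GWlogConvSdec}. This forces equality for each $\delta$.

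For part~\textit{(ii)}, I would establish the two non-inclusions separately, each from one of the incomparability propositions, after a short bookkeeping step about the containments between the relevant classes. Proposition~\ref{InfMonNotCautEx} provides a family in $[\Inf\Mon\Lim]\setminus[\Inf\Caut\Lim_\infty]$; since a $\Lim$-learner is in particular a $\Lim_\infty$-learner, we have $[\Inf\Caut\Lim]\subseteq[\Inf\Caut\Lim_\infty]$, so this family also lies outside $[\Inf\Caut\Lim]$, witnessing $[\Inf\Mon\Lim]\not\subseteq[\Inf\Caut\Lim]$. Symmetrically, Proposition~\ref{InfCautNotMonEx} supplies a family in $[\Caut\Inf\Lim]\setminus[\Inf\Mon\Lim_\infty]$; using $[\Caut\Inf\Lim]\subseteq[\Inf\Caut\Lim]$ (requiring cautiousness everywhere is stronger than requiring it only on the learned languages) together with $[\Inf\Mon\Lim]\subseteq[\Inf\Mon\Lim_\infty]$, this family lies in $[\Inf\Caut\Lim]$ but not in $[\Inf\Mon\Lim]$, giving $[\Inf\Caut\Lim]\not\subseteq[\Inf\Mon\Lim]$. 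The two non-inclusions together are precisely the incomparability $[\Inf\Mon\Lim]\perp[\Inf\Caut\Lim]$.

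Because all the genuine combinatorial and recursion-theoretic content has been discharged in the cited results, I do not expect a real obstacle in this final assembly; the only point demanding care is the bookkeeping of inclusions, in particular correctly tracking the difference between the success-independent placement $[\Caut\Inf\Lim]$ and the on-the-learned-languages placement $[\Inf\Caut\Lim]$, and the passage from $\Lim_\infty$ back to $\Lim$. The truly hard step behind part~\textit{(i)} is Proposition~\ref{GWlogConvSdec} itself, whose proof builds the conservative strongly decisive learner via the normal forms of Section~\ref{sec:WlogTotalMethodical} and the syntactic decisiveness of Lemma~\ref{InfWlogSynDecEx}; here I would simply invoke it.
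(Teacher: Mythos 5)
Your proposal is correct and follows exactly the paper's own route: part~\textit{(i)} is obtained by sandwiching each $[\Inf\delta\Lim]$ between $[\Inf\Conv\SDec\Lim]$ and $[\Inf\Lim]$ via Proposition~\ref{GWlogConvSdec} and the implications of Lemma~\ref{DelayableBackbone}, and part~\textit{(ii)} follows from Propositions~\ref{InfMonNotCautEx} and~\ref{InfCautNotMonEx}. The only difference is that you spell out the bookkeeping (the inclusions $[\Inf\Caut\Lim]\subseteq[\Inf\Caut\Lim_\infty]$ and $[\Caut\Inf\Lim]\subseteq[\Inf\Caut\Lim]$) that the paper leaves implicit, which is a correct and welcome elaboration.
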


\section{Outperforming Learning from text}
\label{sec:infvstxt}
Already in \cite{Gol:j:67} it was observed that $[\Txt\Lim] \subsetneq [\Inf\Lim]$.
Later on in \cite{Lan-Zeu:c:93} the interdependencies when considering the different monotonicity learning restrictions were investigated.
For instance, they showed that there exists an indexable family $\CalL \in [\Inf\Mon\Lim]\setminus[\Txt\Lim]$ and in contrast that for indexable families $\Inf\SMon\Lim$-learnability implies $\Txt\Lim$-learnability.  
We show that this inclusion fails on the level of families of recursive languages even with all learning restrictions at hand.

%\begin{definition}
%Let $L$ be a language. The \emph{set of all text for the language $L$} is
%$$\Txt(L) := \{\, T \in (\N\cup\{\#\})^{\o} \mid \cnt(T) = L \,\},$$
%where $\cnt(T)=\ran(T)\setminus\{\#\}$ is the \emph{content of $T$}.
%\end{definition}
%
%Thus, a text is just an enumeration of the language, since the pause symbol $\#$ is interpreted as no new information but necessary for the empty language. In case $L$ is presented to the learner $M$ in form of a text, $M$ will never know for sure that some natural number $x$ is not in $L$. The \emph{learner}, which was originally thought of as a child during language acquisition, is in this setting modelled by a (partial) computable function $M: dom(M) \subseteq \Seq{\N} \to \N$.

\begin{proposition}
For the class of recursive languages
$$\CalL := \{ 2(L\cup\{x\})\cup 2L+1 \mid L \text{ is recursive} \wedge W_{\min(L)}=L \wedge x \geq \min(L) \}$$
holds
$\CalL \in [\Inf\Conv\SDec\SMon\Lim]\setminus[\Txt\Lim].$
\end{proposition}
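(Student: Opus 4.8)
The two halves are independent. For $\CalL\in[\Inf\Conv\SDec\SMon\Lim]$ I would exhibit one total learner and check the four restrictions on canonical informants; for $\CalL\notin[\Txt\Lim]$ I would run an $\ORT$-diagonalisation against an assumed text learner. The success criterion $\beta:=\Conv\SDec\SMon\Lim$ is an intersection of restrictions none of which is $\Comp$, together with $\Lim$, so it is delayable by Lemma~\ref{DelayableAofsetLemma}; hence Lemma~\ref{MethInfWlogInfSdDelayableEx} gives $[\Inf\beta]=[\Inf_\can\beta]$ and it suffices to succeed on canonical informants. On a canonical informant for $A=2(L\cup\{x\})\cup(2L+1)$ with $m:=\min(L)$ the data can be read off directly: the least odd positive element is $2m+1$, which reveals $m$ and hence $L=W_m$ using $W_{\min(L)}=L$, while the anomaly, if present, is the unique $x\notin L$ with $2x\in\ps$ and $2x+1\in\ng$. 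I would therefore let $M$ pass through at most three phases, using s-m-n indices for $\varnothing$, for $2W_m\cup(2W_m+1)$, and for $2W_m\cup(2W_m+1)\cup\{2x\}$: it outputs the index for $\varnothing$ until the first odd positive datum fixes $m$, then the ``no-anomaly'' hypothesis $2W_m\cup(2W_m+1)$, and finally, once it has seen an even positive $2x$ whose odd partner $2x+1$ is negative, the ``anomaly'' hypothesis $2W_m\cup(2W_m+1)\cup\{2x\}$.

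The four restrictions then follow by direct inspection. As sets the successive hypotheses satisfy $\varnothing\subseteq 2L\cup(2L+1)\subseteq A$, which is $\SMon$ (and hence $\Caut,\Mon,\Dec,\WMon$ by Lemma~\ref{DelayableBackbone}). Each phase uses a single fixed index and the three semantic values are strictly increasing, so no semantic value is ever revisited, giving $\SDec$. For $\Conv$ one checks that a mind change happens exactly when the current hypothesis becomes inconsistent: the index for $\varnothing$ is abandoned only after a positive datum has appeared, and $2W_m\cup(2W_m+1)$ is abandoned only upon detecting $2x$ with $x\notin L$, which is precisely the first positive datum outside that set, no negative datum being able to lie in it since it is $\subseteq A$. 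Finally $\Lim$ holds because all transitions occur after finitely many steps and the terminal index correctly enumerates $A$ in both cases $x\in L$ (phase one is terminal) and $x\notin L$.

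For the negative direction, suppose some $M$ $\Txt\Lim$-learns $\CalL$. The feature to exploit is that a text reveals only positive data, so an even datum $2x$ says nothing about whether $2x+1\in A$; deciding this means deciding $x\in L=W_{\min(L)}$, for which only an enumeration index is at hand. I would turn this into a contradiction by an $\ORT$-construction of a recursive $L$ with $\min(L)=e$ and $W_e=L$, reserving $e$ as the self-describing minimum and fixing a candidate anomaly position. Using the text analogue of Lemma~\ref{locksequ}, I would locate for the base language $2W_e\cup(2W_e+1)$ a stabilising prefix—the learning hypothesis guarantees one exists, which keeps the search, and hence $L$, recursive—extend it by the even datum $2x$, and then diagonalise: $x$ is placed into $L$ or kept out so that $M$'s eventual single hypothesis either omits an element of the committed language or includes $2x+1\notin A$, contradicting convergence to a correct index.

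The main obstacle is exactly this last circularity. Because each language of $\CalL$ carries at most one anomaly and, by $W_{\min(L)}=L$, its minimum is forced to be a fixed self-index, the naive Gold-style pair consisting of a language and a one-odd-point variant cannot both lie in $\CalL$ (they would share the same minimum, hence the same forced index); the diagonalisation must instead be threaded through a single self-referential $L$. The delicate step is to schedule the decision about the critical element so that simultaneously the presented sequence remains a genuine text for the language one finally commits to, $L$ stays recursive with $\min(L)=e$ and $W_e=L$, and $M$'s limiting index is nevertheless wrong. Making the search for $M$'s stabilisation terminate via the learning assumption while ensuring the committed language is still in $\CalL$ is where the care lies; once this is arranged, the undetectability of $2x+1$ from positive data yields the contradiction.
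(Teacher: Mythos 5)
Your positive half is essentially the paper's own argument: the paper's learner likewise passes from an index for $\varnothing$ to an index for $2W_m\cup(2W_m+1)$ to an index for $2(W_m\cup\{x\})\cup(2W_m+1)$, detecting $m$ and the exception $x$ directly from the labelled data (it even does so on arbitrary informants, so your detour through Lemma~\ref{MethInfWlogInfSdDelayableEx} is permissible but not needed), and the verification of $\Conv$, $\SDec$, $\SMon$ and $\Lim$ goes through as you describe.

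The negative half has a genuine gap, precisely at the point you flag as ``where the care lies''. Your plan --- find a stabilising prefix for $2W_e\cup(2W_e+1)$, append $2x$, then put $x$ into $L$ or keep it out according to what $M$'s limiting hypothesis asserts about $2x+1$ --- cannot be carried out effectively: a locking sequence exists by the text analogue of Lemma~\ref{locksequ} but cannot be located by any computable search; the locking property is in any case void once you append $2x\notin 2W_e\cup(2W_e+1)$; and deciding whether $M$'s eventual hypothesis contains $2x+1$ is not computable, so the resulting $L$ would be neither recursive nor arrangeable with $W_{\min(L)}=L$ via $\ORT$. The paper's construction never inspects the content of any hypothesis. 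At stage $i$ it offers \emph{two} candidate anomaly positions, $e+2i$ and $e+2i+1$ (via the even data $2e+4i$ and $2e+4i+2$), and searches only for a \emph{mind change} of $M$ on $\sigma_i$ extended by repetitions of one of them --- a c.e.\ event. If a mind change occurs, that element is absorbed into $L$ together with its odd partner, so the shown datum was legitimate and $M$ is driven to infinitely many mind changes on a text for $2W_e\cup(2W_e+1)\in\CalL$; if no mind change ever occurs, then $\ran(\sigma_i)$ already equals $2W_e\cup(2W_e+1)$ and the two \emph{distinct} languages $2(W_e\cup\{e+2i\})\cup(2W_e+1)$ and $2(W_e\cup\{e+2i+1\})\cup(2W_e+1)$, both in $\CalL$, admit texts ($\sigma_i$ followed by the respective even datum forever) on which $M$ converges to the same index, so it fails on at least one. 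This two-candidate device is exactly what removes any need to evaluate $M$ in the limit; without it, or an equivalent idea, your diagonalisation does not close.
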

\begin{proof}
Let $p_{m}$ denote an index for $2W_m\cup 2W_m+1$ and $p_{m,x}$ an index for $2(W_m\cup\{x\})\cup 2W_m+1$.
The learner $M$ will look for the minimum of the presented set and moreover try to detect the exception $x$, in case it exists. Thus, it checks for all $m$ such that $2m \in \ps(\sigma)$ or $2m+1 \in \ps(\sigma)$ whether for all $k<m$ holds $2k \in \ng(\sigma)$ or $2k+1 \in \ng(\sigma)$.
In case $m$ has this property relative to $\sigma$, we write $\min_L(m,\sigma)$ as $m$ might be the minimum of the language presented.
Further, $M$ tries to find $x$ such that $2x \in \ps(\sigma)$ and $2x+1 \in \ng(\sigma)$ and we abbreviate by $\mathrm{exc}_L(x,\sigma)$ that $x$ is such an exception.
Consider the learner $M$ for all $\sigma \in \Seq{(\Nttwo)}$ defined by
$$M(\sigma)=
\begin{cases}
\ind(\varnothing), &\text{if there is no $m$ with } \min_L(m,\sigma);\\
p_m, &\text{if } \min_L(m,\sigma) \text{ and there is no $x$ with } \mathrm{exc}_L(x,\sigma);\\
p_{m,x}, &\text{if } \min_L(m,\sigma) \text{ and $x$ is minimal with } \mathrm{exc}_L(x,\sigma). \\ 
\end{cases}
$$
Clearly, $M$ conservatively, strongly decisively and strongly monotonically $\Lim$-learns $\CalL$.

To observe $\CalL \notin [\Txt\Lim]$, assume there exists $M$ such that $\CalL \in \Txt\Lim(M)$.
By ORT there exists $e\in\N$ such that for all $i\in\N$
\begin{align*}
A_\sigma(i) &= \{\,k\in\N \mid M(\sigma) \neq M(\sigma\concat(2e + 4i)^k) \,\};\\
B_\sigma(i) &= \{\,k\in\N \mid M(\sigma) \neq M(\sigma\concat(2e + 4i + 2)^k) \,\}; \\
\sigma_0 &= (2e,2e+1);\\
\sigma_{i+1} &= 
\begin{cases}
\sigma_i, &\text{if } A_{\sigma_i}(i) = B_{\sigma_i}(i) = \varnothing \\
&\text{or } i>0 \wedge \sigma_{i-1}=\sigma_i;\\
\sigma_i \concat(2e+4i)^{\inf(A_{\sigma_i}(i))}{\concat}(2e + 4i + 1), &\text{if } A_{\sigma_i}(i)\neq\varnothing\\
&\wedge \inf(A_{\sigma_i}(i)) \leq \\
&\qquad\qquad\quad\inf(B_{\sigma_i}(i)));\\
\sigma_i \concat (2e+4i+2)^{\inf(B_{\sigma_i}(i))}{\concat}(2e + 4i + 3), &\text{if } B_{\sigma_i}(i)\neq\varnothing\\
&\wedge \inf(B_{\sigma_i}(i)) < \\
&\qquad\qquad\quad\inf(A_{\sigma_i}(i));
\end{cases}\\
W_{e} &= \bigcup_{i\in\N} \{n \mid 2n+1 \in \ran(\sigma_i) \}.
\end{align*}
Intuitively, the program on input $n$ successively computes $\sigma_i$ until it finds the minimal $x\geq 2n+1$ in its range; it halts if and only if $x$ is found and $x=2n+1$.

$W_e$ is recursive, because %it is either finite or 
we can decide it along the construction of the $\sigma_i$. Thus, $2W_e \cup 2W_e+1 \in \mathcal{L}$.
If for some index $i$ holds $\sigma_{i+1}=\sigma_i$, then $M$ fails to learn $2(W_e\cup\{e+2i\}) \cup 2W_e+1$ or $2(W_e\cup\{e+2i + 1\}) \cup 2W_e+1$.
On the other hand, if there is no such $i$, by letting $T := \bigcup_{i\in\N} \sigma_i$ we obtain a text for $2W_e \cup 2W_e+1$, on which $M$ performs infinitely many mindchanges.
\end{proof}

\section{Anomalous Hierarchy and Vacillatory Duality}
\label{sec:hierarchies}
We compare the convergence criteria $\Lim^a_b$ from Definition~\ref{def:InfConvergence} for different parameters $a \in \N\cup\{\ast\}$ and $b \in \N_{>0}\cup\{\ast,\infty\}$.
%In Proposition~\ref{hierarchyLimab} we provide the anomalous hierarchy for varying $a$.
The duality depending on whether $b=\infty$ for fixed $a$ follow from the Propositions~\ref{InfBcStrongerEx} and \ref{collapshierarchy}.

\subsection{Anomalous Hierarchy}

\medskip
Beneficial for analyzing the anomalous hierarchy for informant learning are results from function learning.
When learning collections of recursive functions, a text for the graph of the respective function $f$ is presented to the learner and it wants to infer a program code $p$ such that $\varphi_p$ is a good enough approximation to $f$.
More formally, $f=^a \varphi_p$ if and only if $|\{\, x \,|\, f(x) \neq \varphi_p(x) \,\}|\leq a$.
We denote the associated learning criteria in the form $[\Fn\Lim^a_b]$.

\medskip
By the next lemma, collections of functions separating two convergence criteria in the associated setting yield a separating collection for the respective convergence criteria, when learning languages from informant.

\medskip
In the following we make use of a computable bijection $\langle.\;\!,.\rangle: \N\times\N \to \N$ with its computable inverses $\pi_1, \pi_2: \N\to\N$ such that $x=\langle\pi_1(x),\pi_2(x)\rangle$ for all $x\in\N$.

\begin{lemma}
\label{FnResultsToInf}
For $f \in \totalCp$ let $L_f:=\{\,\langle x,f(x)\rangle \mid x\in\N\,\}$ denote the language encoding its graph.
Let $a \in \N\cup\{\ast\}$ and $b\in\N_{>0}\cup\{\ast,\infty\}$.
Then for every $\CalF \subseteq \totalCp$ we define $\CalL_\CalF = \{\,L_f \mid f \in \CalF\,\}$ and obtain
$$\CalF \in [\Fn\Lim^a_b] \quad \Leftrightarrow \quad \CalL_\CalF \in [\Inf\Lim^a_b].$$
\end{lemma}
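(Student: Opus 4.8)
The plan is to exhibit computable translations, in both directions, between a text for the graph of $f$ and an informant for $L_f$, together with matching translations of hypotheses, and to use the delayability of $\Lim^a_b$ (Lemma~\ref{DelayableAofsetLemma}) to absorb the timing mismatches these translations introduce. The two presentations carry the same information: since $\ps(I)=L_f=\{\langle x,f(x)\rangle\mid x\in\N\}$ for every informant $I\in\Inf(L_f)$, the positive part of $I$ is exactly an enumeration of the graph of $f$ (decode via $\pi_1,\pi_2$); conversely, from a text $T$ for the graph, once the value $f(x)$ has been read one may declare $\langle x,y\rangle$ negative for all $y\neq f(x)$ and $\langle x,f(x)\rangle$ positive, so that $T$ yields a total (but delayed, since $f$ is total) informant for $L_f$.

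For the reverse implication $\CalL_\CalF\in[\Inf\Lim^a_b]\Rightarrow\CalF\in[\Fn\Lim^a_b]$ I would feed the informant just described to the assumed informant learner and turn each language guess $q$ into the function guess $p$ with $\varphi_p(x):=$ the first $y$ found with $\langle x,y\rangle\in W_q$. Here the anomaly count transfers cleanly: every element of the symmetric difference $W_q\triangle L_f$ has a unique first coordinate, so at most $a$ arguments $x$ are ``touched''; on every untouched $x$ the set $W_q$ agrees with $L_f$ on all pairs $\langle x,\cdot\rangle$, whence $\langle x,f(x)\rangle$ is the unique such pair in $W_q$ and $\varphi_p(x)=f(x)$. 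Thus $W_q=^aL_f$ forces $\varphi_p=^af$, and since $q\mapsto p$ is a single-valued translation the vacillation bound $b$ is preserved.

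The forward implication is where the real difficulty sits, and it is an anomaly-bookkeeping problem. The naive choice $W_q:=\{\langle x,\varphi_p(x)\rangle\mid \varphi_p(x)\!\!\downarrow\}$ (obtained from $p$ by s-m-n) does \emph{not} preserve $a$: an argument $x$ on which $\varphi_p$ halts with a wrong value $y\neq f(x)$ contributes two elements to $W_q\triangle L_f$, namely the spurious $\langle x,y\rangle$ and the missing $\langle x,f(x)\rangle$, so in general one only obtains $W_q=^{2a}L_f$. The fix I would use is to first put the function learner into a normal form in which its limiting programs have only \emph{holes} as anomalies, i.e.\ $\varphi_p(x)\neq f(x)$ only when $\varphi_p(x)\!\!\uparrow$. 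This is achievable because the graph text eventually reveals $f(x)$ for every $x$: running $\varphi_p$ against the incoming data exposes each of the at most $a$ arguments on which $\varphi_p$ halts with a wrong value, and one may patch the program to diverge there; since there are only finitely many such arguments this costs only finitely many extra mind changes and so preserves $\Lim^a_b$-convergence. For a hole-only program the graph $W_q$ misses exactly the at most $a$ hole-arguments and contains no spurious pairs, hence $W_q=^aL_f$, and the graph map $p\mapsto q$ is again single-valued.

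In both directions I would then invoke the delayability of $\Lim^a_b$ to argue that the rearrangement-with-waiting built into the translations (waiting for $f(x)$ before issuing negative data, respectively waiting for $\varphi_p$ to halt) does not spoil success, as the constructed learner is literally a delayed simulation of the original. The step I expect to be the main obstacle, and the one deserving the most care, is the hole-only normal form and its interaction with the parameter $b$: one must organise the patching of wrong values so that the number of distinct limiting programs stays within $b$ (and, for $b=\infty$, so that correctness up to $a$ anomalies is maintained at every late stage). This is precisely where the asymmetry between function anomalies and language anomalies has to be reconciled.
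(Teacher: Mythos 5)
Your two translations are the same ones the paper uses: in the forward direction the hypothesis map sends $p$ to an r.e.\ index for $\{\langle x,\varphi_p(x)\rangle\mid\varphi_p(x)\downarrow\}$ and the informant learner feeds the decoded positive part of its input to the function learner; in the reverse direction the hypothesis map returns, on input $x$, the first $v$ with $\langle x,v\rangle$ enumerated into the guessed language, and the text is turned into a delayed informant exactly as you describe, with delayability of $\Lim^a_b$ absorbing the waiting. So the skeleton matches the paper's proof.

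The genuine difference is your anomaly bookkeeping, and there you have put your finger on something the paper disposes of in one line (``$G$ preserves the number of anomalies''). Which direction is actually problematic depends on a convention the paper never states: if $\varphi_p=^af$ is read per argument (the convention of the cited function-learning hierarchies, and the one your reverse-direction argument tacitly uses), then the reverse direction is clean --- at most $a$ first coordinates are touched by the symmetric difference, as you argue --- but the forward direction can double, exactly as you observe; if instead $=^a$ for functions is read on graphs as sets, the roles swap and the doubling moves to the reverse direction. Your hole-only normal form is a correct repair of the forward direction for $b$ finite or $\ast$: each of the finitely many limiting programs has at most $a$ commission errors, all eventually exposed by the graph text, so the patch sets stabilise and the vacillation bound survives. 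The gap you flag but do not close is $b=\infty$: there the program output at time $t$ need never recur, and its commission errors may only become detectable (the computation halts, or the argument appears in the text) after time $t$, so the patched hypothesis at time $t$ can still carry up to $2a$ set anomalies for infinitely many $t$. Since the lemma is invoked with $b=\infty$ in Proposition~\ref{hierarchyLimab}, this case cannot be waved through; it needs either the graph-as-set reading of $=^a$ for functions (which then requires a matching repair on the reverse direction) or a separate argument.
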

\begin{proof}
Let $a,b$ and $\CalF$ be as stated.
First, assume there is a learner $M$ on function sequences such that $\CalF \in \Fn\Lim^a_b(M)$.
In order to define the learner $M'$ acting on informant sequences and returning $W$-indices, we employ the following procedure for obtaining a $W$-code $G(p)$ for $L_{\varphi_p}$, when given a $\varphi$-code $p$:
\begin{quote}
Given input $n$, interpreted as $\langle x,y\rangle$, let the program encoded by $p$ run on $x=\pi_1(n)$. If it halts and returns $y=\pi_2(n)$, then halt; otherwise loop.
\end{quote}
The learner $M'$ acts on $\sigma \in \Seq{(\Nttwo)}$ by
$$M'(\sigma) := G(M(\decode(\ps(\sigma)))),$$
where $\decode(\ps(\sigma))$ denotes the from $\sigma$ uniformly computable sequence $\tau$ with $\tau(i)=(\pi_1(n_i),\pi_2(n_i))$ for all $i<|\ps(\sigma)|=|\tau|$, where $(n_i)_{i<|\ps(\sigma)|}$ denotes the enumeration of $\ps(\sigma)$ according to $\sigma$.
By construction, $\CalL_\CalF \in \Inf\Lim^a_b(M')$ as $G$ preserves the number of anomalies.

For the other claimed direction let $M$ be a learner on informant sequences with $\CalL_\CalF \in \Inf\Lim^a_b(M)$.
As above we employ a computable function that for every $f\in\totalCp$ transforms a $W$-index $p$ for $L_f$ into a $\varphi$-index $H(p)$ such that $\varphi_{H(p)}=f$. Thereby, we interpret each natural number $i$ as $\langle \langle u,v\rangle,t\rangle$ and check whether $\varphi_p$ halts on $\langle u,v\rangle$ in at most $t$ steps of computation. If so, we check whether $u$ is the argument $x$ we want to compute $f(x)$ for and in case the answer is yes, we return $v$.
\begin{quote}
Given input $x$, for $i=0$ till $\infty$ do the following: If $\Phi_p(\pi_1(i))\leq \pi_2(i)$ and $\pi_1(\pi_1(i))=x$, then return $\pi_2(\pi_1(i))$; otherwise increment $i$.
\end{quote}
Before defining $M'$, we argue that $H$ preserves the number of anomalies.
Let $x,p\in\N$ be such that $f(x)\neq\varphi_{H(p)}(x)$.
Then $\langle x,\varphi_{H(p)}(x)\rangle\notin L_f$.
On the other hand, by the definition of $H$ we have $\Phi_p(\langle x,\varphi_{H(p)}(x)\rangle)\!\!\downarrow$ and therefore $\langle x,\varphi_{H(p)}(x)\rangle\in W_p\setminus L_f$.

We define the learner $M'$ on $\sigma \in \Seq{(\N\!\times\!\N)}$ by
$$M'(\sigma):=H(M(\hat\sigma)),$$
where we transform $\sigma = ((x_0,f(x_0)),\ldots,(x_{|\sigma|-1},f(x_{|\sigma|-1})))$ into an informant sequence $\hat\sigma$ of length $\widehat{|\sigma|}:=\max\{j\mid \forall i<j \: \pi_1(i)< |\sigma| \}$ by letting
$$
\hat\sigma(i):=
\begin{cases}
(\langle x_{\pi_1(i)},\pi_2(i)\rangle,1) &\text{if } \sigma(\pi_1(i))=(x_{\pi_1(i)},\pi_2(i)) \\
(\langle x_{\pi_1(i)},\pi_2(i)\rangle,0) &\text{otherwise}
\end{cases}
$$
for all $i<\widehat{|\sigma|}$. Note that for every $f\in\totalCp$ and every $T\in\Txt(f)$ by letting $I_T := \bigcup_{j\in\N} \widehat{T[j]}$, we obtain an informant for $L_f$.
We show $(\langle x,f(x)\rangle,1) \in I_T$ for every $x\in\N$ and leave the other details to the reader.
Let $x\in\N$ and $i$ minimal, such that $(x,f(x))\in \ran(T[i])$, i.e., $x_{i-1}=x$.
Further, let $j$ be such that $i\leq\hat \jmath$.
Then clearly
$$I_T(\langle i-1, f(x)\rangle)=\widehat{T[j]}(\langle i-1, f(x)\rangle)=(\langle x, f(x)\rangle,1).$$
In a nutshell, $\CalF \in \Fn\Lim^a_b(M')$ as $H$ preserves the number of anomalies.
\end{proof}

With this we obtain a hierarchy of learning restrictions.

\begin{proposition}
\label{hierarchyLimab}
Let $b \in \{1,\infty\}$. Then
\begin{enumerate}[label=(\roman*)]
\item for all $a \in \N$ holds $[\Inf\Lim^a_b]\subsetneq[\Inf\Lim^{a+1}_b]$,
\item $\bigcup_{a \in \N} [\Inf\Lim^a_b] \subsetneq [\Inf\Lim^\ast_b]$,
\item \label{ExastBc} $[\Inf\Lim^\ast] \subsetneq [\Inf\Bc]$.
\end{enumerate}
\end{proposition}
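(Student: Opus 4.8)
The plan is to reduce every separation to the classical anomaly hierarchies for \emph{function} learning and to pull these back through the equivalence of Lemma~\ref{FnResultsToInf}. In the notation of the excerpt $[\Fn\Lim^a_1]$ is explanatory ($\Ex^a$-)function learning and $[\Fn\Lim^a_\infty]$ is behaviorally correct ($\Bc^a$-)function learning, and the lemma is a biconditional $\CalF\in[\Fn\Lim^a_b]\Leftrightarrow\CalL_\CalF\in[\Inf\Lim^a_b]$ holding separately for each $a\in\N\cup\{\ast\}$ and each $b\in\{1,\infty\}$, via the \emph{same} encoding $\CalF\mapsto\CalL_\CalF$. Consequently any collection of functions witnessing a strict separation in the function setting transports verbatim, as $\CalL_\CalF$, to a witness for the corresponding separation from informants.

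I would first dispatch the inclusions. For (i) and (ii) the containments $[\Inf\Lim^a_b]\subseteq[\Inf\Lim^{a+1}_b]$ and $\bigcup_{a\in\N}[\Inf\Lim^a_b]\subseteq[\Inf\Lim^\ast_b]$ are immediate from Definition~\ref{def:convergence}, since $W_{h_t}=^a L$ implies both $W_{h_t}=^{a+1}L$ and $W_{h_t}=^\ast L$. For (iii) only the inclusion $[\Inf\Lim^\ast]\subseteq[\Inf\Lim_\infty]$ is not purely definitional, and I would prove it by patching: from a $\Lim^\ast$-learner $M$ with limiting index $p$ (so $W_p=^\ast L$) build $M'$ that, using s-m-n, outputs an index for $(W_{M(I[t])}\setminus\ng(I[t]))\cup\ps(I[t])$. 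Because an informant is complete in both its positive and its negative data, eventually $\ps(I[t])$ covers all positive anomalies $L\setminus W_p$ and $\ng(I[t])$ covers all negative anomalies $W_p\setminus L$; hence the patched set equals $L$ from some point on, yielding $\Lim_\infty$-convergence (with syntactic vacillation, which $\Lim_\infty$ permits).

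With the inclusions in hand, each strictness is inherited from function learning. For (i) the anomaly hierarchies $[\Fn\Lim^a_b]\subsetneq[\Fn\Lim^{a+1}_b]$ for $b\in\{1,\infty\}$ of \citet*{Bar:j:74:two-thrm} and \citet*{case1983comparison} furnish, for each $a$, some $\CalF\in[\Fn\Lim^{a+1}_b]\setminus[\Fn\Lim^a_b]$, and Lemma~\ref{FnResultsToInf} makes $\CalL_\CalF$ a witness for $[\Inf\Lim^a_b]\subsetneq[\Inf\Lim^{a+1}_b]$. For (ii) I would take $\CalF\in[\Fn\Lim^\ast_b]\setminus\bigcup_{a\in\N}[\Fn\Lim^a_b]$ from the same sources; applying the lemma once with $a=\ast$ and once with each finite $a$ shows $\CalL_\CalF\in[\Inf\Lim^\ast_b]$ but $\CalL_\CalF\notin[\Inf\Lim^a_b]$ for every $a\in\N$. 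For (iii) I would invoke the classical $[\Fn\Lim^\ast]\subsetneq[\Fn\Lim_\infty]$ (that is, $\Ex^\ast\subsetneq\Bc$) and transport a witness $\CalF\in[\Fn\Lim_\infty]\setminus[\Fn\Lim^\ast]$ to $\CalL_\CalF\in[\Inf\Lim_\infty]\setminus[\Inf\Lim^\ast]$.

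The argument has no deep obstacle; the two points requiring care are, first, invoking Lemma~\ref{FnResultsToInf} with the correct parameter in each direction---$a=\ast$ for membership in the larger class and each finite $a$ for non-membership in the smaller ones in part (ii)---and, second, the single genuinely non-definitional inclusion $[\Inf\Lim^\ast]\subseteq[\Inf\Lim_\infty]$, where it is precisely the two-sided completeness of informants that lets the finite patch become exact in the limit.
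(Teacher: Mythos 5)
Your proposal is correct and follows essentially the same route as the paper: all three separations are pulled back from the classical function-learning anomaly hierarchies of Barzdins and Case--Smith through the biconditional of Lemma~\ref{FnResultsToInf}. The paper's proof is a one-line appeal to that transfer, so your explicit patching argument for the non-definitional inclusion $[\Inf\Lim^\ast]\subseteq[\Inf\Lim_\infty]$ (which the transfer lemma alone does not supply, since it only speaks about collections of the form $\CalL_\CalF$) is a welcome piece of added care rather than a deviation.
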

\begin{proof}
By Lemma \ref{FnResultsToInf} this results transfer from the corresponding observations for function learning in \cite{Bar:j:74:two-thrm} and \cite{case1983comparison}.%\cite{Cas-Smi:j:83}.
\end{proof}

In particular, we have
\begin{align*}
[\Inf\Lim] &\subsetneq \ldots \subsetneq [\Inf\Lim^a] \subsetneq [\Inf\Lim^{a+1}] \subsetneq \ldots \\
&\subsetneq \bigcup_{a \in \N} [\Inf\Lim^a] \subsetneq  [\Inf\Lim^\ast] \\
&\subsetneq [\Inf\Bc] \subsetneq \ldots \subsetneq [\Inf\Bc^a] \subsetneq [\Inf\Bc^{a+1}] \subsetneq \ldots \\
&\subsetneq \bigcup_{a \in \N} [\Inf\Bc^a] \subsetneq  [\Inf\Lim^\ast_\infty].
\end{align*}

Lemma~\ref{FnResultsToInf} obviously also holds when considering $\Txt\Lim^a_b$-learning languages, where the construction of the text sequence from the informant sequence is folklore. This reproves the results in \cite{Cas-Lyn:c:82}.

\subsection{Duality of the Vacillatory Hierarchy}

In Proposition \ref{hierarchyLimab} we already observed a hierarchy, when varying the number of anomalies and will now show that allowing the learner to vacillate between finitely many correct hypothesis in the limit does not give more learning power. On the contrary, only requiring semantic convergence, i.e., allowing infinitely many correct hypotheses in the limit, does allow to learn more collections of languages even with an arbitrary semantic learning restriction at hand.
This contrasts the results in language learning from text in \cite{case1999power}, observing for every $a \in \N\cup\{\ast\}$ a hierarchy
\begin{align*}
[\Txt\Lim^a] &\subsetneq \ldots \subsetneq [\Txt\Lim^a_b] \subsetneq [\Txt\Lim^a_{b+1}] \subsetneq \ldots \\
&\subsetneq \bigcup_{b \in \N_{>0}} [\Txt\Lim^a_b] \subsetneq  [\Txt\Lim^a_\ast] \subseteq [\Txt\Bc^a].
\end{align*}

\medskip
We separate $\Inf\Lim$- and $\Inf\Bc$-learning at the level of families of recursive languages, even when requiring the $\Bc$-learning sequence to meet all introduced delayable semantic learning restrictions.
As every indexable family of recursive languages is $\Lim$-learnable from informant by enumeration, the result is optimal.
%Without the learning restrictions at hand this was already proven by \cite{Bar:j:74:two-thrm} and \cite{Cas-Smi:j:83}, independently.

\begin{proposition}
\label{InfBcStrongerEx}
For the collection of recursive languages
$$\CalL = \{ L \cup \{x\} \mid L \subseteq \N \text{ is recursive} \wedge W_{\min(L)}=L \wedge x \geq \min(L) \}$$
holds
$\CalL \in [\Inf\SMon\Bc]\setminus[\Inf\Lim].$
%Especially, for every $\delta \in \{\Caut,\Dec,\Mon,\SMon,\WMon,\NU\}$
%$$\CalL \in [\Inf\delta\Bc]\setminus[\Inf\Lim].$$
\end{proposition}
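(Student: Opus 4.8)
The plan is to establish the two memberships separately: first exhibit a strongly monotonic $\Lim_\infty$-learner, then rule out every $\Lim$-learner by a recursion-theoretic diagonalization.

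For the positive part I would exploit that $\SMon\Lim_\infty$ is delayable (Lemma~\ref{DelayableAofsetLemma}), so by Lemma~\ref{MethInfWlogInfSdDelayableEx} it suffices to succeed on canonical informants. The key structural fact is that every target $L\cup\{x\}\in\CalL$ satisfies $\min(L\cup\{x\})=\min(L)$, because $x\geq\min(L)$; hence the least \emph{positive} datum already reveals the self-index of $L$. Concretely, I would set $M(\sigma)=\ind(\varnothing)$ while $\ps(\sigma)=\varnothing$, and otherwise, with $m:=\min(\ps(\sigma))$, let $M(\sigma)$ be an s-m-n index for $W_m\cup\ps(\sigma)$. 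On a canonical informant every position below $\min(L)$ is negative, so $m$ equals $\min(L)$ from the first positive datum onward and never changes again; thus $W_{M(\sigma)}=W_{\min(L)}\cup\ps(\sigma)=L\cup\ps(\sigma)$ is $\subseteq$-increasing, yielding $\SMon$. Since the informant eventually presents $(x,1)$, from that point $W_{M(\sigma)}=L\cup\{x\}$ is exactly the target, which gives $\Lim_\infty$ (the index keeps changing when $L$ is infinite, matching that no $\Lim$-learner can exist).

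For $\CalL\notin[\Inf\Lim]$, suppose towards a contradiction that some $M$ $\Lim$-learns $\CalL$ from informants; by Lemma~\ref{InfWlogTotalDelayableEx} I may assume $M$ total. Using $\ORT$ I would obtain an index $e$ and build $L:=W_e$ with $\min(L)=e$ (committing $0,\dots,e-1$ as negative and $e$ as positive), so that automatically $W_{\min L}=L$ and $L\in\CalL$ via $x=e$. The construction runs in stages along the canonical informant: given the current prefix $\sigma_s$ with $p_s:=M(\sigma_s)$, I search over \emph{all} finite bit-extensions $\tau$ (all ways of labelling the next positions positively or negatively) for one with $M(\sigma_s\concat\tau)\neq p_s$, committing such a $\tau$ to $L$ when found, while defaulting the untouched positions to negative during the search so that $L$ remains recursive.

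The argument then splits into two cases, and the locked case is the crux. If a mind-change extension is found at every stage, then $M$ performs infinitely many mind changes on the canonical informant for $L=W_e$, so it fails to $\Lim$-learn $L\in\CalL$. Otherwise the search diverges at some stage $s^\ast$, which---because $M$ is total and the search ranges over \emph{all} extensions---means $M(\sigma_{s^\ast}\concat\tau)=p$ for \emph{every} finite $\tau$, where $p:=M(\sigma_{s^\ast})$; here $L$ is the finite set $\ps(\sigma_{s^\ast})$. I would now play two targets against this strong locking: $L_0:=L$ and $L_1:=L\cup\{y\}$ for a position $y$ beyond the committed prefix $\sigma_{s^\ast}$ (so $y\notin L$ and $y\geq e$), both lying in $\CalL$ (each has minimum $e$ with $W_e=L$). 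The canonical informants for $L_0$ and $L_1$ both extend $\sigma_{s^\ast}$---the latter using the anomalous label $(y,1)$---so by the strong locking $M$ converges to the \emph{same} index $p$ on both; since $L_0\neq L_1$, the index $p$ is wrong for at least one of them, contradicting that $M$ $\Lim$-learns $\CalL$. The point I expect to require the most care is making the stagewise search and the default-labelling interleave correctly so that $L$ is genuinely recursive in both cases, and verifying that letting the mind-change search quantify over \emph{anomalous} extensions is exactly what upgrades ordinary locking to the strong locking needed for the two-target step.
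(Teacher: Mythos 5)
Your argument is correct and follows essentially the same route as the paper's proof: the same reduction to canonical informants via delayability, the same $\SMon\Lim_\infty$-learner built from the self-index $\min(\ps(\sigma))$, and the same ORT diagonalization that either forces infinitely many mind changes on the canonical informant of $W_e$ or reaches a locked prefix that is then played against the two targets $L$ and $L\cup\{y\}$. The only cosmetic difference is that your mind-change search ranges over all canonical bit-extensions, whereas the paper tests just the two extensions ($I^0_{\sigma_i}$ and $I^1_{\sigma_i}$) that are actually needed in the endgame.
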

\begin{proof}
By Lemma~\ref{lem:canInfSdDelayableEx} it suffices to show $$\CalL \in [\Inf_\can\SMon\Bc]\setminus[\Inf_\can\Lim].$$
By s-m-n there are $p: \Seq{\Nttwo} \times \N \to \N$ and a learner $M$ such that for all $\sigma \in \Seq{\Nttwo}$ and $x \in \N$
\begin{align*}
W_{p(\sigma,x)} &= W_{\min(\ps(\sigma))}\cup\{x\} \text{ and} \\
M(\sigma) &=
\begin{cases}
o, &\text{if } \ps(\sigma)=\varnothing; \\
\min(\ps(\sigma)), &\text{else if } \ps(\sigma)\setminus W_{\min(\ps(\sigma))}^{|\sigma|} = \varnothing; \\
p(\sigma,x), &\text{else if } x=\min(\ps(\sigma)\setminus W_{\min(\ps(\sigma))}^{|\sigma|});
\end{cases} \\
\intertext{
where $o$ refers to the canonical index for the empty set.
Let $L\cup\{x\} \in \CalL$ with $L \subseteq \N$ recursive, $W_{\min(L)}=L$ and $x \geq \min(L)$
and let $I$ be the canonical informant for $L \cup \{x\}$.
Then for all $t > \min(L)$ we have $W_{\min(\ps(I[t]))}=W_{\min(L)}=L$.
Further, let $m$ be minimal such that $\{ y\in L \mid y<x\} \subseteq W_{\min(L)}^m$.
Since $x\geq \min(L)$ the construction yields for all $t \in \N$
}
W_{h_t}&=
\begin{cases}
\varnothing, &\text{if } t \leq \min(L); \\
L, &\text{else if } \min(L) \leq t < \max\{x+1,m\}; \\
L \cup \{x\}, &\text{otherwise.}
\end{cases}
\end{align*}
This can be easily verified, since in case $y \in L$ we have $L = L\cup\{y\}$ and establishes the $\Inf_\can\SMon\Bc$-learnability of $\CalL$ by $M$.

\medskip
In order to approach $\CalL \notin [\Inf_\can\Lim]$, assume to the contrary that there is a learner $M$ that $\Inf_\can\Lim$-learns $\CalL$.
By Lemma \ref{lem:TotalInfDelayableEx} $M$ can be assumed total.
We first define a recursive language $L$ with $W_{\min(L)}=L$ helpful for showing that not all of $\CalL$ is $\Inf_\can\Lim$-learned by $M$.
In order to do so, for every canonical $\sigma \in \Seq{\Nttwo}$ we define sets $A^0_\sigma, A^1_\sigma \subseteq \N$. For this let $I^0_\sigma$ stand for the canonical informant of $\ps(\sigma)$, whereas $I^1_\sigma$ denotes the canonical informant of $\ps(\sigma)\cup\{|\sigma|\}$.
In $A^0_\sigma$ we collect all $t > |\sigma|$ for which $M$'s hypothesis on $I^0_\sigma[t]$ is different from $M(\sigma)$. Similarly, in $A^1_\sigma$ we capture all $t  > |\sigma|$ such that $M$ on $I^1_\sigma[t]$ makes a guess different from $M(\sigma)$.
Formally, this reads as follows
\setcounter{footnote}{1}
\begin{align*}
A^0_\sigma &:= \{\, t \in \N \mid t > |\sigma| \wedge M(I^0_\sigma[t])\neq M(\sigma)\,\}, \\
A^1_\sigma &:= \{\, t \in \N \mid t > |\sigma| \wedge M(I^1_\sigma[t])\neq M(\sigma)\,\}.
\intertext{
As $\sigma$ is canonical, for every $t > |\sigma|$
}
I^0_\sigma[t] &= \sigma\concat (\,(|\sigma|,0),(|\sigma|\!+\!1,0),\ldots,(t-1,0)\,), \\
I^1_\sigma[t] &= \sigma\concat (\,(|\sigma|,1),(|\sigma|\!+\!1,0),\ldots,(t-1,0)\,).
\intertext{
By ORT there exists $p \in \N$ such that\footnotemark
}
\sigma_0 &= (\,(0,0), \ldots, (p-1,0), (p,1) \,),\\
\forall i \in \N \colon
\sigma_{i+1} &=
\begin{cases}
\sigma_i, & \text{if }A^0_{\sigma_i} = A^1_{\sigma_i} = \emptyset ;\\
I^0_{\sigma_i}[\min(A^0_{\sigma_i})], &\text{if } \inf(A^0_{\sigma_i}) \leq \inf(A^1_{\sigma_i});\\
I^1_{\sigma_i}[\min(A^1_{\sigma_i})], &\text{otherwise;}
\end{cases}\\
W_p &= \bigcup_{i \in \N} \ps(\sigma_i).
\intertext{
Intuitively, the program $p$ on input $x$ halts if $x=p$ or in the successive construction of the sequence $(\sigma_i)_{i \in \N}$ there is $j$ with $|\sigma_j|>x$ and $\sigma_j(x)=(x,1)$.
Hence, $p = \min(W_p)$ and $W_p$ is recursive, which immediately yields $L := W_p \in \CalL$.
Further, for every $i \in \N$ from $\sigma_i\neq\sigma_{i+1}$ follows $M(\sigma_i)\neq M(\sigma_{i+1})$.
Aiming at a contradiction, let $I$ be the canonical informant for $L$, which implies $\bigcup_{i\in\N}\sigma_i \inseg I$.
Since $M$ $\Lim$-learns $L$ and thus does not make infinitely many mind changes on $I$, there exists $i_0 \in \N$ such that for all $i \geq i_0$ we have $\sigma_i=\sigma_{i_0}$.
But then for all $t > |\sigma_{i_0}|$ holds
}
M(I^0_{\sigma_{i_0}}[t]) &= M(\sigma_{i_0}) = M(I^1_{\sigma_{i_0}}[t]),
\end{align*}
\footnotetext{Again we use the convention $\inf(\emptyset) = \infty$.}
thus $M$ does not learn at least one of $L = \ps(\sigma_{i_0})$ and $L \cup\{|\sigma_{i_0}|\}$ from their canonical informant. On the other hand both of them lie in $\CalL$ and therefore, $M$ had not existed in the beginning. 
\end{proof}

Since allowing infinitely many different correct hypotheses in the limit gives more learning power, the question arises, whether finitely many hypotheses already allow to learn more collections of languages. The following proposition shows that, as observed in \cite{Bar-Pod:c:73} and \cite{case1983comparison} for function learning, the hierarchy of vacillatory learning collapses when learning languages from informant.

\begin{proposition}
\label{collapshierarchy}
Let $a \in \N\cup\{\ast\}$. Then $[\Inf\Lim^a]=[\Inf\Lim^a_\ast]$.
\end{proposition}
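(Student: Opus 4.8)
The inclusion $[\Inf\Lim^a]\subseteq[\Inf\Lim^a_\ast]$ is immediate, since $\Lim^a=\Lim^a_1$ and every singleton is a finite set, so any $\Lim^a$-learner is in particular a $\Lim^a_\ast$-learner. The plan is therefore to establish the reverse inclusion $[\Inf\Lim^a_\ast]\subseteq[\Inf\Lim^a]$. First I would invoke the normal forms: both $\Lim^a$ and $\Lim^a_\ast$ are convergence criteria, hence delayable by Lemma~\ref{DelayableAofsetLemma}, so by Lemma~\ref{MethInfWlogInfSdDelayableEx} and Lemma~\ref{InfWlogTotalDelayableEx} it suffices to start from a total learner $M$ with $\CalL\in[\totalCp\Inf_\can\Lim^a_\ast]$ and to produce a learner witnessing $\CalL\in[\Inf_\can\Lim^a]$. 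Fixing $L\in\CalL$ with canonical informant $I=I_L$, the crucial structural observation is that on a fixed informant a $\Lim^a_\ast$-learner emits only finitely many distinct hypotheses in total: it is eventually confined to a finite set, and only finitely many stages precede that point. Writing $V:=\{M(I[t])\mid t\in\N\}$ and $V_\infty:=\{e\mid M(I[t])=e\text{ for infinitely many }t\}$, the set $V$ is finite, $V_\infty\subseteq V$ is finite and nonempty, and every $e\in V_\infty$ satisfies $W_e=^aL$, since such an $e$ still occurs at times beyond the stabilisation point.

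The task thus reduces to turning the vacillation among the good indices in $V_\infty$ into a single syntactically stable hypothesis that is still within $a$ anomalies of $L$. I would let $M'$ output, at stage $t$, an index obtained uniformly by s-m-n for an amalgam built from the finitely many indices $M$ has produced so far, cleaned against the data: the canonical informant reveals the full characteristic function $\chi_L$ of $L$ in the limit, so one can discard from the amalgam exactly those elements that $I$ eventually certifies to lie outside $L$, i.e. the detectable over-generalisations contained in $\ng(I)$. Since $V$ stabilises, the finite set of indices feeding the amalgam stabilises, which is what should drive syntactic convergence; and since removing $\ng(I)$ forces the amalgam below $L$ while the union over the good indices still contains all but at most $a$ elements of $L$, the limiting hypothesis ought to be $=^aL$. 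This is exactly the point where the informant setting departs from text learning, for which the analogous vacillatory hierarchy is proper: the availability of negative information, and hence of $\chi_L$ in the limit, is the resource that makes the collapse possible, mirroring the function-learning collapse of \citet*{Bar-Pod:c:73} and \citet*{case1983comparison} that is transported by Lemma~\ref{FnResultsToInf}.

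The hard part will be reconciling two requirements that pull against each other. On the one hand, simply selecting a single good index forces one to converge to, say, the least member of $V_\infty$, but membership in $V_\infty$ is a $\Pi^0_2$ property of the (computable) output sequence and so cannot be decided in the limit; on the other hand, naive set-combinations do not respect the exact bound $a$, since a union of several sets that are each $=^aL$ can accumulate up to $a$ false positives per index, and a majority vote can differ from $L$ on up to $2a$ arguments rather than $a$. The construction must therefore use the informant not merely to feed $M$ but to trim the amalgam back to within $a$ anomalies while keeping the emitted index eventually constant; realising this trimming by a \emph{fixed} program, which has no access to the informant, is the delicate step, and I expect it to be carried out by a self-referential \ORT/s-m-n argument of the kind used elsewhere in this paper, arranging the amalgam program so that its own enumeration stabilises once $V$ does and no detectable anomaly survives. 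Verifying, finally, that $M'$ makes only finitely many mind changes and that its final index is $=^aL$ for every $L\in\CalL$ and every canonical informant completes the argument.
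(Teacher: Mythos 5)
Your first two paragraphs set up exactly the right strategy, and it is essentially the one the paper follows: reduce to a total learner, then emit at each stage an amalgam of the finitely many hypotheses $M$ has produced so far, each patched against the negative data seen so far, arranged via plain s-m-n so that the emitted index stabilises (no self-referential $\ORT$ argument is needed). But you stop at the decisive point and explicitly defer it ("realising this trimming by a fixed program \ldots is the delicate step"), so the proof is not complete. The missing idea is a threshold filter at the level of \emph{indices}: at stage $\sigma$, withdraw entirely every index $p_i=M(\sigma[i])$ for which more than $a$ commission errors have already been detected, i.e.\ $|\ng(\sigma)\cap W^{|\sigma|}_{p_i}|>a$, and take the union only over the surviving indices after deleting their known commission errors. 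This is what lets you bypass the $\Pi^0_2$ problem of identifying $V_\infty$ that you correctly flag: you never need to know which indices are output infinitely often. Any index with at most $a$ commission errors (whether or not it lies in $V_\infty$) contributes only finitely many false positives, all of which eventually appear in $\ng(I)$ and are patched out; any index with more than $a$ commission errors is eventually and permanently withdrawn, because more than $a$ of its errors are eventually enumerated and certified negative. Without this index-level withdrawal, your element-level cleaning fails: an early wrong guess of $M$ may have \emph{infinitely} many commission errors, and a fixed program can only excise the finite set $\ng(I[t_0])$ known at the stage where the emitted index must stabilise, so infinitely many false positives would survive into the final hypothesis.

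Your other worry --- that a union of several $=^a L$ sets could accumulate $a$ false positives per index --- is in fact already handled by the cleaning you propose: there are only finitely many such indices, each with at most $a$ false positives, and all of these finitely many errors are eventually certified negative by the informant and removed, so the limiting amalgam has \emph{no} commission errors at all. The bound $a$ on the remaining anomalies then comes purely from omissions: $M$ eventually outputs some $e$ with $W_e=^aL$; that $e$ survives the filter, its cleaned version still covers all of $L$ except at most $a$ elements, and taking a union can only shrink the set of omissions. Syntactic convergence follows because both the withdrawal decisions and the finite patches stabilise. With the threshold filter added, your outline becomes the paper's proof; as written, the argument has a named but unfilled hole at its centre.
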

\begin{proof}
Clearly, $[\Inf\Lim^a]\subseteq[\Inf\Lim^a_\ast]$.
For the other inclusion let $\CalL$ be in $[\Inf\Lim^a_\ast]$ and $M$ a learner witnessing this.
By Lemma~\ref{lem:TotalInfDelayableEx} we assume that $M$ is total.
In the construction of the $\Lim^a$-learner $M'$, we employ the recursive function
$\Xi: \Seq{(\Nttwo)}\times\N\to\N$, which given $\sigma \in \Seq{(\Nttwo)}$ and $p\in\N$ alters $p$ such that $W_{\Xi(\sigma,p)}^{|\sigma|}\cap\ng(\sigma)=\varnothing$ and moreover, if $\sigma \inseg \tau$ are such that $W_p^{|\sigma|}\cap\ng(\sigma)=W_p^{|\tau|}\cap\ng(\tau)$, then $\Xi(\sigma,p)=\Xi(\tau,p)$.
One way to do this is by letting $\Xi(\sigma,p)$ denote the unique program, which given $x$ successively checks, whether $x=y_i$, where 
$(y_i)_{i<|\ng(\sigma)|}$ is the increasing enumeration of $\ng(\sigma)$. As soon as the answer is positive, the program goes into a loop. Otherwise it executes the program encoded in $p$ on $x$, which yields
$$
\varphi_{\Xi(\sigma,p)}(x)=
\begin{cases}
\uparrow, &\text{if } x\in\ng(\sigma);\\
\varphi_p(x), &\text{otherwise.}
\end{cases}
$$

\smallskip
Now, $M'$ works as follows:
\begin{enumerate}[label=\Roman*.]
\item Compute $p_i:=M(\sigma[i])$ for all $i\leq|\sigma|$.
\item Withdraw all $p_i$ with the property
		$|\ng(\sigma)\cap W_{p_i}^{|\sigma|}|> a$.
		%\cup (\ps(\sigma)\cap \N\setminus W_{p_i}^{|\sigma|})
\item Define $M'(\sigma)$ to be a code for the program coresponding to the union vote of all $\Xi(\sigma,p_i)$, for which $p_i$ was not withdrawn in the previous step:

\begin{quote}
Given input $x$, for $n$ from 0 till $\infty$ do the following:
If $i := \pi_1(n)\leq|\sigma|$, $|\ng(\sigma)\cap W_{p_i}^{|\sigma|}|\leq a$ and $\Phi_{\Xi(\sigma,p_i)}(x)\leq \pi_2(n)$, then return 0; otherwise increment $n$.
\end{quote}

This guarantees
$$\varphi_{M'(\sigma)}(x)=\begin{cases}
0, &\text{if } \exists \,i \leq |\sigma| \:(\, |\ng(\sigma)\cap W_{p_i}^{|\sigma|}|\leq a \:\wedge\: \varphi_{\Xi(\sigma,p_i)}(x)\!\!\downarrow \,); \\
\uparrow, &\text{otherwise.}
\end{cases}$$
\iffalse
$$\varphi_{M'(\sigma)}(x)=\begin{cases}
\varphi_p(x), &\text{if } i \leq |\sigma| \text{ exists such that } p = \Xi(\sigma,p_i) \text{ and}\\
& |\ng(\sigma)\cap W_{p_i}^{|\sigma|}|\leq a \text{ and for all } j\leq|\sigma|\\
& \text{with } |\ng(\sigma)\cap W_{p_j}^{|\sigma|}|\leq a \text{ holds }
\Phi_{p}(x)<\Phi_{\Xi(\sigma,p_j)}(x) \\
&\text{or } (\,\Phi_{p}(x)=\Phi_{\Xi(\sigma,p_j)}(x) \wedge i\leq j \,); \\
\uparrow, &\text{otherwise.}
\end{cases}$$
\fi
\end{enumerate}
Intuitively, $M'(\sigma)$ eliminates all membership errors in guesses of $M$ on initial segments of $\sigma$, not immediately violating the allowed number of anomalies, and then asks whether one of them converges on the input,
which implies $$W_{M'(\sigma)}=\bigcup_{i\leq|\sigma|, |\ng(\sigma)\cap W_{p_i}^{|\sigma|}|\leq a} W_{\Xi(\sigma,M(\sigma[i]))}.$$

\smallskip
In order to show $\CalL \in \Inf\Lim^a(M')$, let $L \in \CalL$ and $I \in \Inf(L)$.
As $\CalL \in \Lim^a_\ast(M)$, there is $t_0$ such that all of $M$'s hypotheses on $I$ are in $\{h_s\mid s\leq t_0\}$ and additionally $|\, W^{t_0}_{h_s}\cap \N\setminus L \,|>a$ for all $s \leq t_0$ with $|\,W_{h_s} \cap \N\setminus L| > a$.
Moreover, we can assume that for all $s\leq t_0$ with $|\, W_{h_s}\cap \N\setminus L \,|\leq a$ we have observed all commission errors in at most $t_0$ steps, which formally reads as $W_{h_s}\cap \N\setminus L = W_{h_s}^{t_0}\cap \N\setminus L$.

Then for all $t \geq t_0$ we obtain the same set of indices $$A:=\{\,\Xi(I[t],p_i) \mid i\leq t \wedge |\ng(I[t])\cap W^t_{p_i}|\leq a\,\}$$ and therefore $M'$ will return syntactically the same hypothesis, namely, $h'_{t_0}$.

It remains to argue for $W_{h'_{t_0}}=^a L$. By construction and the choice of $t_0$ there are no commission errors, i.e., $W_{h'_{t_0}} \cap \N\setminus L =\varnothing$. Further, since $\varphi_{h'_{t_0}}(x)$ exists in case there is at least one $p\in A$ such that $\varphi_{p}(x)$ exists, there are at most $a$ arguments, on which $\varphi_{h'_{t_0}}$ is undefined.
\end{proof}

This contrasts the results in language learning from text in \cite{case1999power}, observing for every $a \in \N\cup\{\ast\}$ a hierarchy
\begin{align*}
[\Txt\Lim^a] &\subsetneq \ldots \subsetneq [\Txt\Lim^a_b] \subsetneq [\Txt\Lim^a_{b+1}] \subsetneq \ldots \\
&\subsetneq \bigcup_{b \in \N_{>0}} [\Txt\Lim^a_b] \subsetneq  [\Txt\Lim^a_\ast] \subseteq [\Txt\Bc^a].
\end{align*}

For learning from informant we gain for every $a \in \N\cup\{\ast\}$ a duality
\begin{align*}
[\Inf\Lim^a] & = \ldots = [\Inf\Lim^a_b] = [\Inf\Lim^a_{b+1}] = \ldots \\
&= \bigcup_{b \in \N_{>0}} [\Inf\Lim^a_b] =  [\Inf\Lim^a_\ast] \subsetneq [\Inf\Bc^a].
\end{align*}

\section{Learning Characteristic Functions of Collections of Recursive Languages}
\label{sec:RecInd-RecLan}
We now turn to the setting in which we want to learn a set of Boolean classifiers.
In Machine Learning the input is usually considered a labeled element of $\mathbb{R}^d$.
It is reasonable to consider only the countably many $d$-tuples $x$ of computable reals $\mathbb{R}_\mathrm{comp}^d$.
By fixing a (non-computable) enumeration $\mathbb{R}_\mathrm{comp}^d=\langle x_i \mid i<\N \rangle$, we might as a first attempt identify $i$ with $x_i$.
Then our hypothesis space is the set of all Boolean functions.
We will later restrict ourselves to total computable Boolean functions.

\bigskip
Definitions \ref{def:informant} for informant and \ref{def:learner} for the learner are independent of the interpretation of the hypothesis.
The Definition~\ref{def:InfConvergence} of convergence criteria has to be slightly modified as follows.

\begin{definition}
\label{def:RecInd-convergence}
Let $M$ be a learner and $\CalL$ a collection of recursive languages. Further, let $a \in \N\cup\{\ast\}$ and $b \in \N_{>0}\cup\{\ast,\infty\}$.
\begin{enumerate}
\item Let $L \in \CalL$ be a language and $I \in \Inf(L)$ an informant for $L$ presented to $M$.
\begin{enumerate}
\item We call $h=(h_t)_{t\in\N} \in \ISeq{(\N\cup\{?\})}$, where $h_t := M(I[t])$ for all $t \in \N$, the \emph{learning sequence of $M$ on $I$}.
\item $M$ \emph{learns $L$ from $I$ with $a$ anomalies and vacillation number $b$ in the limit}, for short $M$ ${\Lim_C}^a_b$-learns $L$ from $I$ or ${\Lim_C}_b^a(M,I)$, if there is a time $t_0 \in \N$ such that $|\,\{\, h_t \mid t \geq t_0 \,\}\,| \leq b$ and for all $t \geq t_0$ we have
$\Diff_L(h_t) = \{ x\in\N \mid \varphi_{h_t}(x)\neq\classifier_L(x) \}$ has at most size $a$.
\end{enumerate}
\item $M$ \emph{learns $\CalL$ with $a$ anomalies and vacillation number $b$ in the limit}, for short $M$ ${\Lim_C}^a_b$-learns $\CalL$, if ${\Lim_C}^a_b(M,I)$ for every $L\in\CalL$ and every $I \in \Inf(L)$.
\end{enumerate}
\end{definition}

This is also equivalent to learning the characteristic function of $L$ from text.

We also have to adjust the Definition~\ref{def:lockingsequence} of locking sequences.

\begin{definition}
\label{def:RecInd-lockingsequence}
Let $M$ be a learner, $L$ a language and $a\in\N\cup\{\ast\}$ as well as $b\in\N_{>0}\cup\{\ast,\infty\}$. We call $\sigma \in \Seq{(\Nttwo)}$ an \emph{${\Lim_C}^a_b$-locking sequence for $M$ on $L$}, if $\Comp(\sigma,L)$ and
\begin{align*}
\exists D \subseteq \N \:\Big(\, &|D|\leq b \:\wedge\: \forall \tau \in \Seq{(\Nttwo)} \Big(\, \Comp(\tau,L) \Rightarrow \\
&\left.\left. \left( M(\sigma\concat\tau)\!\!\downarrow \wedge \,|\Diff_L(M(\sigma\concat\tau))|\leq a \wedge M(\sigma\concat\tau)\in D \,\right)\right) \,\right)
\end{align*}
\end{definition}

Then the proof of Lemma~\ref{lem:lockingsequence} immediately transfers and we obtain the following lemma.

\begin{lemma}
\label{lemma:RecInd-lockingsequence}
Let $M$ be a learner, $a\in\N\cup\{\ast\}$, $b\in\N_{>0}\cup\{\ast,\infty\}$ and $L$ a language ${\Lim_C}^a_b$-identified by $M$. Then there is a ${\Lim_C}^a_b$-locking sequence for $M$ on $L$.
\end{lemma}

We also have to adjust the Definition~\ref{def:consistent} of consistency in the following way.

\begin{definition}
\label{def:RecInd-consistent}
Let $\varphi$ be a Boolean computable function.
We define
\begin{align*}
\ps(\varphi) = \{ x\in\N\mid \varphi(x)\!\!\downarrow\,=1\}; \\
\ng(\varphi) = \{ x\in\N\mid \varphi(x)\!\!\downarrow\,=0\}.
\end{align*}
Let $f \in \IfiSeq{(\Nttwo)}$. We say \emph{$f$ is consistent with $\varphi$}, for short $\Comp(f,\varphi)$, if
\begin{align*}
\ps(f) \subseteq \ps(\varphi) \;\wedge\; \ng(f) \subseteq \ng(\varphi).
\end{align*}
\end{definition}

Let $C_{h_t}$ denote $\ps(\varphi_{h_t})$.
By replacing $W_{h_t}$ by $C_{h_t}$, the definitions of the learning restrictions in Definition~\ref{def:learningrestrictions}, learning success criteria in Definition~\ref{def:learningsuccesscriterion} and learning criteria in Definition~\ref{def:learningcriterium} remain the same.
The implications (independent of the learning success criterion at hand) between the delayable learning restrictions as stated in Lemma~\ref{lem:delayablebackbone} hold accordingly.

\medskip
Moreover, the Definition~\ref{def:delayable} and basic Lemma~\ref{lem:delayable} concerning delayability remain unchanged.
Also Lemma~\ref{lem:canInfSdDelayableEx} about considering canonical informant being sufficient and Lemma~\ref{lem:TotalInfDelayableEx} about totality being no restriction for delayable learning success criteria still hold as the proofs only refer to the concept of delayability.

\bigskip
To our knowledge Machine Learning algorithms only hypothesize total classifiers.
Denote the set of encoded programs for total Boolean functions on $\N$ by $\CInd$.
From now on we will allow the learner $M$ to hypothesize elements of $\CInd$ on data consistent with some classifier to be learned only.
We denote by $[\Inf\CInd\Lim_C]$ the collection of all recursive languages $\Lim_C$-learnable by such a learner $M$ from informant.
In Definition~\ref{def:learningsuccesscriterion} in the learning success criterion at position $\beta$, we write $\CInd$ between the learning restrictions to be met and the convergence criterion.

With $[\CInd\Inf\Lim_C]$ we refer to the collection of all recursive languages $\Lim_C$-learnable by a learner with range contained in $\CInd$.
These learners only output hypotheses for total computable Boolean functions and in Definition~\ref{def:learningsuccesscriterion} we write $\CInd$ as part of $\alpha$.

Later we might consider appropriately chosen subsets of $\CInd$ as hypothesis space.

\medskip
In this setting we can assume the learner to output only hypotheses consistent with the input on relevant data.
This is done by patching the hypothesis according to the finitely many training data points the learner has received so far.

\begin{proposition}
\label{prop:RecInd-canInfConsEx}
We have
\begin{equation*}
        [\Inf_\can\CInd\Ex]=[\Sd\Inf\Cons\CInd\Ex].
\end{equation*}
\end{proposition}
\begin{proof}
We use the idea from Lemma~\ref{lem:canInfSdDelayableEx}.
Thus, the new learner outputs $M$'s hypothesis $h$ on the largest complete canonical informant with information only from the current input $\sigma$.
As $h$ is an index for a total function, we can, in a uniformly computable way, obtain a hypothesis $h_\sigma$ from $h$ such that
\begin{enumerate}
\item $\varphi_{h_\sigma}$ is consistent with all data in $\sigma$ and
\item $h_\sigma=h$ if $\sigma$ is consistent with $\varphi_h$.
\end{enumerate}
More precisely, the computable operator maps an index $h$ of a computable function $\varphi_h: \N \to \{0,1\}$ and a finite informant sequence $\sigma$ to an index $h_\sigma$ of a computable function $\varphi_{h_\sigma}$ with
\begin{align*}
\varphi_{h_\sigma}(x)=\begin{cases}
1, &\text{if } x \in \ps(\sigma);\\
0, &\text{else if } x \in \neg(\sigma);\\
\varphi_h(x), &\text{otherwise.}
\end{cases}
\end{align*}
The simulation only requires information about $\ps(\sigma)\cup\ng(\sigma)$ and thus the learner is set-driven.
Further, $h_{\sigma}=h$ whenever $\varphi_h$ is consistent with $\sigma$.
As $M$ converges on the canonical informant and we only alter $h$ in case at least one datum in $\sigma$ is inconsistent with $\varphi_h$, we obtain the convergence of the new learner.
Clearly, it is consistent by construction.
\end{proof}

Summing up, as consistency of the input data with a hypothesized total computable Boolean functions is decidable, $\CInd$-learners can be assumed consistent while learning.
By the same argument $\tau(\CInd)$-learners can be assumed $\tau(\Cons)$.

\bigskip
It is easy to see that $\Ex$ can be replaced by every convergence criterion (and also $\Mon$, $\NU$, $\SNU$).

\bigskip
On the other hand, it is easy to adapt the proof of Proposition~\ref{prop:TotalInfConsEx} as follows.

\begin{proposition}
\label{prop:RecInd-TotInfConsEx}
There is a \emph{collection of decidable languages} witnessing
$$[\totalCp\Inf\Cons\CInd{\Lim_C}] \subsetneq [\Inf\Cons\CInd{\Lim_C}].$$
\end{proposition}
\begin{proof}
Let $o$ be an index for the everywhere $0$-function.
Further, define for all $\sigma \in \Seq{(\Nttwo)}$ the learner $M$ by
$$
M(\sigma) := \begin{cases}
o, &\text{if } \ps(\sigma)=\varnothing; \\
%\varphi_{\max(\ps(\sigma))}(0), &\text{otherwise.} \\
\varphi_{\max(\ps(\sigma))}(\langle\sigma\rangle), &\text{otherwise.}
\end{cases}
$$
We argue that $\CalL := \{\, L \subseteq \N \mid L \text{ is decidable and } L\in\Inf\Cons{\Lim_C}(M) \,\}$ is not consistently learnable by a total learner from informant.
Assume towards a contradiction $M'$ is such a learner.
For a sequence $\sigma$ of natural numbers we denote by $\overline{\sigma}$ the corresponding canonical finite informant sequence, ending with the highest value $\sigma$ takes. Further, for a natural number $x$ we denote by $\tau(x)$ the unique element of $\Seq{\N}$ with $\langle\tau(x)\rangle=x$.
Then by padded ORT there are $e, z \in \N$ and functions $a, b: \Seq{\N} \to \N$, such that
\begin{equation}
\label{eq:CInd-abincreasing}
\forall \sigma, \tau \in \Seq{\N} \: (\, \sigma \prinseg \tau \Rightarrow \max\{a(\sigma),b(\sigma)\} < \min\{a(\tau),b(\tau)\}\,),
\end{equation}
with the property that for all $\sigma \in \Seq{\N}$ and all $i \in \N$
\begin{align}
\sigma_0 &= \varnothing; \nonumber \\
\sigma_{i+1} &= \sigma_i\,\concat \begin{cases}
a(\sigma_i), &\text{if } M'(\overline{\sigma_i\:\!\concat a(\sigma_i)})\neq M'(\overline{\sigma_i}); \\
b(\sigma_i), &\text{otherwise;}
\end{cases} \label{eq:CInd-MCNewLearner} \\
\varphi_e(y) &= \begin{cases}
1, &\text{if } y \in \ps(\overline{\sigma_y});\\
0, &\text{otherwise;}
\end{cases} \nonumber \\
%\varphi_e &= \bigcup_{i \in \N} \overline{\sigma_i}; \nonumber \\
%W_{p(\sigma)} &= \ps(\overline{\sigma}); \nonumber \\
%W_{q(\sigma)} &= \ps(\overline{\sigma}) \cup \{a(\sigma)\}; \nonumber \\
\varphi_{a(\sigma)}(x)&=\begin{cases}
e, & \text{if } \Comp(\tau(x),\varphi_e) \text{ and } M'(\overline{\sigma\concat a(\sigma)})\neq  M'(\overline{\sigma}); \\
%q(\sigma), & \text{if } M'(\overline{\sigma\concat a(\sigma)}) =  M'(\overline{\sigma}) \text{ and } \Comp(\tau(x),W_{q(\sigma)}); \\
\ind(\ps(\tau(x))), & \text{otherwise;}
\end{cases} \nonumber \\
\varphi_{b(\sigma)}(x) &= \begin{cases}
e, & \text{if } \Comp(\tau(x),\varphi_e); \\
\ind(\ps(\tau(x))), & \text{otherwise;}
\end{cases} \nonumber
\end{align}
Consider the decidable language $L_e=\ps(\varphi_e)$.
Clearly, we have $L_e \in \CalL$ and thus $M'$ also $\Inf\Cons{\Lim_C}$-learns $L_e$.
By the ${\Lim_C}$-convergence there are $e', j \in \N$, where $j$ is minimal, such that $\varphi_{e'}=\varphi_e$ and for all $i \geq j$ we have $M'(\overline{\sigma_i})=e'$ and hence $M'(\overline{\sigma_i\:\!\concat a(\sigma_i)})= M'(\overline{\sigma_i})$ by \eqref{eq:CInd-MCNewLearner}.

We now argue that $L := \ps(\overline{\sigma_j}) \cup \{a(\sigma_j)\} \in \CalL$.
Let $I$ be an informant for $L$ and $t \in \N$.
By \eqref{eq:CInd-MCNewLearner} we observe that $M$ is consistent on $I$ as
$$M(I[t])=\varphi_{\max(\ps(I[t]))}(\langle I[t] \rangle) = \begin{cases}
e, &\text{if } \Comp(I[t],\varphi_e); \\
%q(\sigma), &\text{else if } \max(\ps(I[t]))=a(\sigma), \\
%&M'(\overline{\sigma\concat a(\sigma)})=  M'(\overline{\sigma}) \\
%&\text{and }\Comp(I[t],W_{q(\sigma)}); \\
\ind(\ps(I[t])), & \text{otherwise.}
\end{cases}$$
Further, by the choice of $j$ we have $\neg\Cons(\,(a(\sigma_j),1),\varphi_e\,)$.
If $\ps(I[t])=L$, we obtain $\varphi_{M(I[t])}=\ind_L$.
On the other hand $M'$ does not consistently learn $L$ as by the choice of $j$ we obtain $M'(\overline{\sigma_j\!\,\concat a(\sigma_j)})=M'(\overline{\sigma_j})=e'$ and $\neg \Comp(\overline{\sigma_j\!\,\concat a(\sigma_j)},L_e)$, a contradiction.
\end{proof}

Thus, learning algorithms not defined on all inputs have strictly more learning power. 

\bigskip
As we clearly can do a padding-trick for $C$-indices, similar to Lemma~\ref{lem:InfSynDecEx}, we might assume the learner to be syntactically decisive.
Furthermore, the separations of $\Caut$, $\Mon$ and $\SMon$ are still valid as they are witnessed by indexable families.
Thus, the interesting question is whether $\Conv$ and $\SDec$ are also not restrictive for binary classifiers.
We now observe that this still holds true but the proof is much simpler than for $W$-indices, because the consistency of data with hypotheses is decidable.

\begin{theorem}
\label{thm:RecInd-ConvSDecInfEx}
For $\delta \in \{\mathbf{T},\Mon\}$ holds
$$[\Inf\delta\CInd\Bc_C]=[\Inf\Conv\SDec\delta\CInd\Ex_C].$$
\end{theorem}
\begin{proof}
By the comment after Proposition~\ref{prop:RecInd-canInfConsEx} we assume $\delta\subseteq\Cons$.
Let $\CalL\in[\Inf\delta\CInd\Bc_C]$ and the learner $M$ witnessing this.
It is an easy exercise to check that the following learner acts as required, where $\sigma$ is a finite informant sequence and $\xi\in\N\times\{0,1\}$:
\begin{align*}
M'(\varnothing)&=M(\varnothing);\\
M'(\sigma\concat\xi)&=
\begin{cases}
M(\sigma\concat\xi), &\text{if } \neg\Comp(\sigma\concat\xi,M'(\sigma));\\
M'(\sigma), &\text{otherwise.}
\end{cases}
\end{align*}
Note that the consistency of $M$ on $\CalL$ is only employed to obtain $\SDec$.
\end{proof}

\begin{corollary}
$[\Inf_\can\CInd\Bc_C]=[\Inf\Cons\Conv\CInd\Ex_C].$
\end{corollary}

For $\tau(\CInd)$-learners the simulation in Theorem~\ref{thm:RecInd-ConvSDecInfEx} preserves totality.

\bigskip
In a nutshell for learners only outputting $C$-indices, we obtain the same map as for $W$-indices.
In contrast, $\Cons$ is not a restriction anymore.

\bigskip
Moreover, $\Bc_C$-learning is not weaker than explanatory learning and thus the vacillatory hierarchy collapses.

\section{Further Research}
Future investigations could address the relationships between the different delayable learning restrictions for other convergence criteria, where the general results in Section \ref{sec:WlogTotalMethodical} may be helpful.

\medskip
Another open question regards the relation between learning recursive functions from text for their graphs and learning languages from either informant or text. It seems like delayability plays a crucial role in order to obtain normal forms and investigate how learning restrictions relate in each setting. 
It is yet not clear, whether delayability is the right assumption to generalize Lemma \ref{FnResultsToInf}.
The survey \cite{Zeu-Zil:j:08:Survey} and the standard textbook \cite{Jai-Osh-Roy-Sha:b:99:stl2} contain more results in the setting of function learning which may transfer to learning collections of languages from informant with such a generalization.

%Along this line would also be reproving that consistent learning from canonically ordered data is easier than consistent learning from every possible order of presentation, with the observation $[\Cons\Fn\Lim] \subsetneq [\Cons\Fn_\can\Lim]$ by \cite{jantke1980combining} and a generalization of Lemma \ref{FnResultsToInf} to consistent learning.

\medskip
According to \cite{STL1} requiring the learner to base its hypothesis only on the previous one and the current datum, makes $\Lim$-learning harder.
While the relations between the delayable learning restrictions for these so called \emph{iterative learners} in the presentation mode of solely positive information has been investigated in \cite{jain2016role}, so far this has not been done when learning from informant.
For indexable families, this was already of interest to \cite{lz-tmllc-92}, \cite{lange2003variants} and \cite{Jai-Lan-Zil:j:07}.
In the corresponding map each of $\Caut$, $\Mon$ and $\SMon$ is separated from all other learning restrictions.
Moreover, $\Conv$ restricts iterative learning from informant and we are sure that also $\SNU$ does.
It remains open, whether all syntactic learning criteria have the same learning power.
Further, it seems like settling $\NU$, $\Dec$ and $\WMon$ requires completely new techniques.
This model is of special interest as it models the behavior of neural networks.
Further improvements to the model would be a more appropriate hypothesis space, a probabilistic presentation of the data and other convergence criteria.
For $C$-indices the incomparability of $\Caut$ and $\Mon$, as well as the separation of $\Conv$ are still valid.

\medskip
For automatic structures as alternative approach to model a learner, there have been investigations on how different types of text effect the $\Lim$-learnabi\-lity, see \cite{jain2010learnability} and \cite{pmlr-v76-holzl17a}. The latter started investigating how learning from canonical informant and learning from text relate to one another in the automatic setting. A natural question seems to be what effect other kinds of informant and learning success criteria have.

\medskip
Last but not least, rating the models value for other research aiming at understanding the capability of human and machine learning is another very challenging task to tackle.

\subsection*{Acknowledgments and Funding}
%We thank Martin Aschenbach who initiated parts of this study with his master thesis supervised by the first author, in which some of the results and proof ideas were included.
This work was supported by the German Research Foundation (DFG) under Grant KO 4635/1-1 (SCL).

%\vskip 0.2in
%\bibliographystyle{plainnat}
\bibliographystyle{alpha}
\bibliography{CLTBib}

\end{document}